\def\cluster#1{\ensuremath{[#1]_c}}
\def\pcomp{\mi{PComp}}
\def\tcomp{\mi{TComp}}
\begin{document}

\mainmatter  

\title{Markings in Perpetual Free-Choice Nets Are Fully Characterized by Their Enabled Transitions}

\titlerunning{Markings in Perpetual Free-Choice Nets}

%
%
\author{Wil M.P. van der Aalst}

\authorrunning{Wil van der Aalst}


\institute{Process and Data Science (PADS), RWTH Aachen University, Germany.\\
\email{wvdaalst@pads.rwth-aachen.de}}

%
%
\maketitle

\begin{abstract}
A marked Petri net is \emph{lucent} if there are no two different reachable markings enabling the same set of transitions, i.e.,
states are fully characterized by the transitions they enable.
This paper explores the class of marked Petri nets that are lucent
and proves that \emph{perpetual marked free-choice nets} are lucent.
Perpetual free-choice nets are free-choice Petri nets that are live and bounded and have a home cluster, i.e.,
there is a cluster such that from any reachable state there is a reachable state marking the places of this cluster.
A home cluster in a perpetual net serves as a ``regeneration point'' of the process, e.g.,
to start a new process instance (case, job, cycle, etc.).
Many ``well-behaved'' process models fall into this class.
For example, the class of short-circuited sound workflow nets is perpetual.
Also, the class of processes satisfying the conditions of the $\alpha$ algorithm for process discovery falls into this category.
This paper shows that the states in a perpetual marked free-choice net are fully characterized
by the transitions they enable, i.e., these process models are lucent.
Having a one-to-one correspondence between the actions that can happen and
the state of the process, is valuable in a variety of application domains.
The full characterization of markings in terms of enabled transitions
makes perpetual free-choice nets interesting for workflow analysis and process mining.
In fact, we anticipate new verification, process discovery, and conformance checking techniques for the subclasses identified.
\end{abstract}

\section{Introduction}
\label{sec:intro}

Structure theory is a branch in Petri nets \cite{mbp-aal-stahl-2011,murata,reisig-book-2013,lopn1,lopn2} that asks what behavioral properties
can be derived from it structural properties \cite{bestfcn,structure-theory-ToPNoC-advanced-course2010,deselesparza}.
Many different subclasses have been studied.
Examples include state machines, marked graphs, free-choice nets, asymmetric choice nets, and nets without TP and PT handles.
Structure theory also studies local structures such as traps and siphons that may reveal information about the behavior of the Petri net
and includes linear algebraic characterizations of behavior involving the matrix equation or invariants \cite{structure-theory-ToPNoC-advanced-course2010,deselesparza,murata}.

In this paper, we focus on the following fairly general question:
\emph{What is the class of Petri nets for which each marking is uniquely identified by the set of enabled transitions?}
We call such nets \emph{lucent}. A lucent marked Petri net cannot have two different reachable markings that enable the same set of transitions.
\begin{figure}[t]
\centering
\includegraphics[width=4cm]{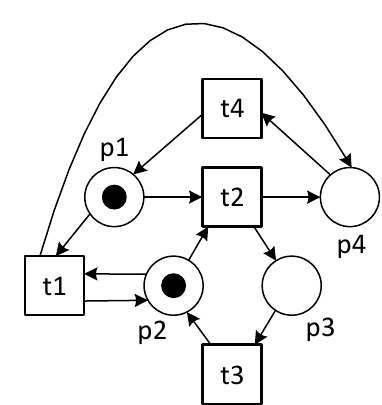}
\caption{A perpetual marked free-choice net (i.e., live, bounded, and having a home cluster) that is lucent (each reachable marking is unique in terms of the transitions it enables).}
\label{fig-alternating}
\end{figure}

Consider, for example, the Petri net shown in Figure~\ref{fig-alternating}.
There are four reachable markings.
Marking $[p1,p2]$ enables $\{t1,t2\}$.
Marking $[p1,p3]$ enables $\{t3\}$.
Marking $[p2,p4]$ enables $\{t4\}$.
Marking $[p3,p4]$ enables $\{t3,t4\}$.
Hence, the marked net is lucent, because each of the four markings is uniquely identified by a particular set of enabled transitions.
The Petri net shown in Figure~\ref{fig-intro-nfc} is not lucent.
After firing either transition $t1$ or $t2$ only $t3$ is enabled, i.e.,
the two corresponding $[p2,p5]$ and $[p2,p6]$ markings enable the same set of transitions.
The choice between $t4$ and $t5$ is controlled by a token in $p5$ or $p6$, and this state information is not ``visible'' when only $t3$ is enabled.
As illustrated by Figure~\ref{fig-intro-nfc}, it is easy to construct non-free-choice nets that are not lucent.
Moreover, unbounded Petri nets cannot be lucent. These examples trigger the question: Which classes of marked Petri nets are guaranteed to be lucent?

In this paper, we will show that \emph{perpetual marked free-choice nets} are always lucent.
These nets are live and bounded and also have a so-called \emph{home cluster}.
A home cluster serves as a ``regeneration point'', i.e., a state where all tokens mark a single cluster.
The property does not hold in general. Liveness, boundedness, the existence of a home cluster, and the free-choice requirement are all needed.
We will provide various counterexamples illustrating that dropping one of the requirements is not possible.

Free-choice nets are well studied \cite{bdefcn,structure-theory-ToPNoC-advanced-course2010,Esparza98TCS,thiavoss}. The definite book on the structure theory of free-choice nets is \cite{deselesparza}.
Also, see \cite{structure-theory-ToPNoC-advanced-course2010} for pointers to literature.
Therefore, it is surprising that the question whether markings are uniquely identified by the set of enabled transitions (i.e., lucency)
has not been explored in literature. Most related to the results presented in this paper is the work on the so-called \emph{blocking theorem}
\cite{blocking-theorem-gaujala2003,blocking-theorem-wehler2010}.
Blocking markings are reachable markings which enable transitions from only a single cluster. Removing the cluster yields a dead marking.
Figure~\ref{fig-alternating} has three blocking markings ($[p1,p2]$, $[p1,p3]$, and $[p2,p4]$). The blocking theorem states that in a bounded and live free-choice net each cluster has a unique blocking marking.
We will use this result, but prove a much more general property.
\emph{Note that we do not look at a single cluster and do not limit ourselves to blocking markings.}
We consider all markings including states (partially) marking multiple clusters.
\begin{figure}[t]
\centering
\includegraphics[width=7cm]{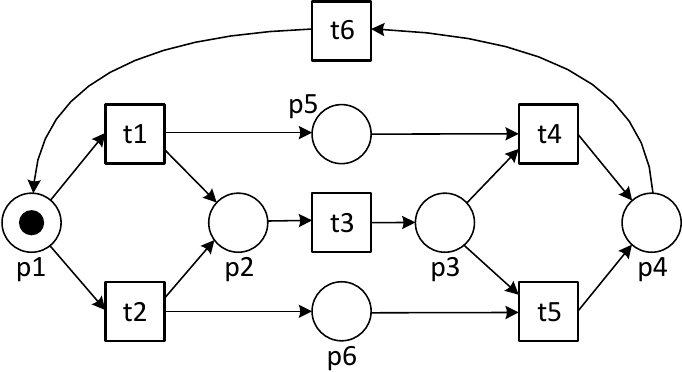}
\caption{A perpetual marked non-free-choice net that is not lucent because there are two reachable markings ($[p2,p5]$ and $[p2,p6]$) enabling the same set of transitions ($\{t3\}$).}
\label{fig-intro-nfc}
\end{figure}

We expect that the theoretical results presented in this paper will enable new analysis techniques in
related fields such as \emph{business process management} \cite{BPM_book_Marlon_Marcello_Jan_Hajo_2013}, \emph{workflow management} \cite{patterns-book-2016}, and \emph{process mining} \cite{process-mining-book-2016}.
Lucency is a natural assumption in many application domains and should be exploited.
For example, the worklists of a workflow management system show the set of enabled actions.
Hence, lucency allows us to reason about the internal state of the system in terms of the actions it allows.
We also anticipate that lucency can be exploited in workflow verification, process discovery, and conformance checking \cite{wires-replay}.
Event logs used in process mining only reveal the actions performed and not the internal state \cite{rcis2013-keynote-the-quest-for-the-right-model,process-mining-book-2016}.
Moreover, the class of perpetual marked free-choice nets considered in this paper is quite large and highly relevant in many application domains.
The existence of a ``regeneration point'' (home cluster) is quite general, and liveness and boundedness (or soundness) are often desirable.
For example, the class of short-circuited sound workflow nets is perpetual.
Processes that are cyclic, often have a home cluster.
Non-cyclic process often have a clear start and end state and can be short-circuited thus introducing a home cluster.
For example, the representational bias of the $\alpha$ algorithm (i.e., the class of process models for which rediscovery is guaranteed)
corresponds to a subclass of perpetual marked free-choice nets \cite{aal_min_TKDE}.

The remainder of this paper is organized as follows.
Section~\ref{sec:prelim} introduces preliminaries and known results (e.g., the blocking theorem).
Section~\ref{sec:lucency} defines lucency as a (desirable) behavioral property of marked Petri nets.
Section~\ref{sec:charmark} introduces perpetual nets and important notions like partial P-covers and local safeness.
These are used to prove the main theorem of this paper showing that markings are unique in terms of the transitions they enable.
Section~\ref{sec:concl} concludes the paper.

\section{Preliminaries}
\label{sec:prelim}

This section introduces basic concepts related to Petri nets, subclasses of nets (e.g., free-choice nets and workflow nets),
and blocking markings.

\subsection{Petri Nets}
\label{sec:petrinets}

Multisets are used to represent the state of a Petri net.
$\bag(A)$ is the set of all multisets over some set $A$.
For some multiset $b\in \bag(A)$, $b(a)$ denotes the number of times element $a\in A$ appears in $b$.
Some examples: $b_1 = [~]$, $b_2 = [x,x,y]$, $b_3 = [x,y,z]$, $b_4 = [x,x,y,x,y,z]$, and $b_5 = [x^3,y^2,z]$
are multisets over $A=\{x,y,z\}$.
$b_1$ is the empty multiset, $b_2$ and $b_3$ both consist of three elements, and
$b_4 = b_5$, i.e., the ordering of elements is irrelevant and a more compact notation may be used for repeating elements.

The standard set operators can be extended to multisets, e.g., $x\in b_2$, $b_2 \bplus b_3 = b_4$, $b_5 \setminus b_2 = b_3$, $|b_5|=6$, etc.
$\{a \in b\}$ denotes the set with all elements $a$ for which $b(a) \geq 1$.
$[f(a) \mid a \in b]$ denotes the multiset where element $f(a)$ appears $\sum_{x \in b \mid f(x)=f(a)}\ b(x)$ times.

$\sigma = \langle a_1,a_2, \ldots, a_n\rangle \in X^*$ denotes a sequence over $X$ of length $n$.
$\langle~\rangle$ is the empty sequence. Sequences can be concatenated using ``$\cdot$'', e.g., $\langle a,b \rangle \cdot \langle b,a \rangle = \langle a,b,b,a \rangle$.
It is also possible to project sequences: $\langle a,b,b,a,c,d\rangle \tproj_{\{a,c\}} = \langle a,a,c\rangle$.

\begin{definition}[Petri Net]\label{def:pn}
A Petri net is a tuple $N=(P,T,F)$ with $P$ the non-empty set of places, $T$ the non-empty set of transitions such that
$P \cap T = \emptyset$, and $F\subseteq (P \times T) \cup (T \times P)$ the flow relation such that the graph $(P \cup T, F)$ is connected.
\end{definition}

\begin{definition}[Marking]\label{def:mrk}
Let $N=(P,T,F)$ be a Petri net.
A marking $M$ is a multiset of places, i.e., $M \in \bag(P)$.
$(N,M)$ is a marked net.
\end{definition}

For a subset of places $X \subseteq P$: $M \tproj_X = [p \in M \mid p \in X]$ is the marking projected on this subset.

A Petri net $N=(P,T,F)$ defines a directed graph with nodes $P\cup T$ and edges $F$.
For any $x\in P\cup T$, ${\prenet{N}{x}} = \{y\mid (y,x)\in F\}$ denotes the set of input nodes and
${\postnet{N}{x}} = \{y\mid (x,y)\in F\}$ denotes the set of output nodes.
The notation can be generalized to sets: ${\prenet{N}{X}}=\{y\mid \exists_{x\in X} \ (y,x)\in F\}$ and ${\postnet{N}{X}} = \{y\mid \exists_{x\in X} \ (x,y)\in F\}$
for any $X \subseteq P\cup T$.
We drop the superscript $N$ if it is clear from the context.

A transition $t \in T$ is \emph{enabled} in marking $M$ of net $N$, denoted as $(N,M)[t\rangle$, if each of its input places ${\pre{t}}$ contains at least one token.
$\mi{en}(N,M) = \{ t \in T \mid (N,M)[t\rangle \}$ is the set of enabled transitions.

An enabled transition $t$ may \emph{fire}, i.e., one token is removed from each of the input places ${\pre{t}}$ and
one token is produced for each of the output places ${\post{t}}$.
Formally: $M' = (M\bmin {\pre{t}})\bplus {\post{t}}$ is the marking resulting from firing enabled transition $t$ in marking $M$ of Petri net $N$.
$(N,M)[t\rangle (N,M')$ denotes that $t$ is enabled in $M$ and firing $t$ results in marking $M'$.

Let $\sigma = \langle t_1,t_2, \ldots, t_n \rangle \in T^*$ be a sequence of transitions.
$(N,M)[\sigma\rangle (N,M')$ denotes that there is a set of markings $M_0, M_1, \ldots, M_n$ ($n \geq 0$)
such that $M_0 = M$, $M_n = M'$, and $(N,M_i)[t_{i+1}\rangle (N,M_{i+1})$ for $0 \leq i < n$.
A marking $M'$ is \emph{reachable} from $M$ if there exists a \emph{firing sequence} $\sigma$ such that $(N,M)[\sigma\rangle (N,M')$.
$R(N,M) = \{ M' \in \bag(P) \mid \exists_{\sigma \in T^*} \ (N,M)[\sigma\rangle (N,M') \}$ is the set of all reachable markings.

Figure~\ref{fig-not-home} shows a marked Petri net having 8 places and 7 transitions. Transitions $t3$ and $t6$ are enabled in the initial marking $M=[p3,p6]$.
$R(N,M) = \{ [p3,p6], [p6,p7],\allowbreak [p3,p8], [p7,p8], [p1,p2], [p3,p4], [p5,p6], [p4,p7], [p5,p8]\}$.
For example, the firing sequence $\langle t3,t6,t7 \rangle$ leads to marking $[p1,p2]$, i.e., $(N,[p3,p6])[\langle t3,t6,t7 \rangle \rangle (N,[p1,p2])$.
\begin{figure}[thbp]
\centering
\includegraphics[width=6cm]{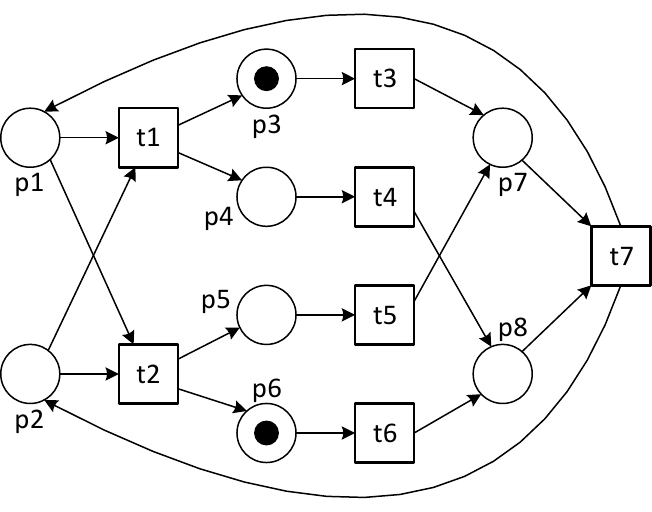}
\caption{A perpetual marked free-choice net \cite{deselesparza}. The net is live, bounded, and has so-called ``home clusters'' (e.g., $\{p7,p8,t7\}$). The net is also lucent.}
\label{fig-not-home}
\end{figure}

A \emph{path} in a Petri net $N=(P,T,F)$ is a sequence of nodes $\rho = \langle x_1,x_2, \ldots ,x_n \rangle$ such that $(x_i,x_{i+1}) \in F$ for $1 \leq i < n$.
$\rho$ is an elementary path if $x_i \neq x_j$ for $1 \leq i < j \leq n$.

Next, we define a few, often desirable, behavioral properties: liveness, boundedness, and the presence of (particular) home markings.

\begin{definition}[Liveness and Boundedness]\label{def:lb}
A marked net $(N,M)$ is \emph{live} if for every reachable marking $M' \in R(N,M)$ and every transition $t\in T$ there exists a marking $M'' \in R(N,M')$ that enables $t$.
A marked net $(N,M)$ is $k$-bounded if for every reachable marking $M' \in R(N,M)$ and every $p \in P$: $M'(p) \leq k$.
A marked net $(N,M)$ is \emph{bounded} if there exists a $k$ such that $(N,M)$ is $k$-bounded.
A 1-bounded marked net is called \emph{safe}.
A Petri net $N$ is \emph{structurally bounded} if $(N,M)$ is bounded for any marking $M$.
A Petri net $N$ is \emph{structurally live} if there exists a marking $M$ such that $(N,M)$ is live.
A Petri net $N$ is \emph{well-formed} if there exists a marking $M$ such that $(N,M)$ is live and bounded.
\end{definition}

The marked Petri net shown in Figure~\ref{fig-not-home} is live and safe. Hence, it is also well-formed.

\begin{definition}[Home Marking]\label{def:hm}
Let $(N,M)$ be a marked net.
A marking $M_H$ is a home marking if for every reachable marking $M' \in R(N,M)$: $M_H \in R(N,M')$.
$(N,M)$ is cyclic if $M$ is a home marking.
\end{definition}

The marked Petri net shown in Figure~\ref{fig-not-home} has 8 home markings: $\{ [p6,p7],\allowbreak [p3,p8],\allowbreak [p7,p8],\allowbreak [p1,p2],\allowbreak [p3,p4],\allowbreak [p5,p6],\allowbreak [p4,p7],\allowbreak [p5,p8]\}$.
However, the net is not cyclic because $[p3,p6]$ is not a home marking.

\subsection{Subclasses of Petri Nets}
\label{sec:subclass}

For particular subclasses of Petri net there is a relationship between structural properties and behavioral properties like liveness and boundedness \cite{structure-theory-ToPNoC-advanced-course2010}.
In this paper, we focus on free-choice nets \cite{deselesparza}.

\begin{definition}[P-net, T-net, and Free-choice Net]\label{def:stfcn}
Let $N=(P,T,F)$ be a Petri net.
$N$ is an P-net (also called a state machine) if $\card{{\pre{t}}} = \card{{\post{t}}} = 1$ for any $t \in T$.
$N$ is a T-net (also called a marked graph) if $\card{{\pre{p}}} = \card{{\post{p}}} = 1$ for any $p \in P$.
$N$ is free-choice net if for any for any $t_1,t_2 \in T$: ${\pre{t_1}} = {\pre{t_2}}$ or ${\pre{t_1}} \cap {\pre{t_2}} = \emptyset$.
$N$ is strongly connected if the graph $(P \cup T,F)$ is strongly connected, i.e., for any two nodes $x$ and $y$ there is a path leading from $x$ to $y$.
\end{definition}

An alternative way to state that a net is free-choice is the requirement that for any $p_1,p_2 \in P$: ${\post{p_1}} = {\post{p_2}}$ or ${\post{p_1}} \cap {\post{p_2}} = \emptyset$.
The Petri net shown in Figure~\ref{fig-not-home} is free-choice. The Petri net shown in Figure~\ref{fig-intro-nfc} is not free-choice because $t4$ and $t5$ shared an input place ($p3$) but have different sets of input places.

\begin{definition}[Siphon and Trap]\label{def:siptrap}
Let $N=(P,T,F)$ be a Petri net and $R \subseteq P$ a subset of places.
$R$ is a siphon if ${\pre{R}} \subseteq {\post{R}}$.
$R$ is a trap if ${\post{R}} \subseteq {\pre{R}}$.
A siphon (trap) is called proper if it is not the empty set.
\end{definition}

Any transition that adds tokens to a siphon also takes tokens from the siphon.
Therefore, an unmarked siphon remains unmarked.
Any transition that takes tokens from a trap also adds tokens to the trap.
Therefore, a marked trap remains marked.

\begin{definition}[Cluster]\label{def:clust}
Let $N=(P,T,F)$ be a Petri net and $x \in P \cup T$.
The cluster of node $x$, denoted $\cluster{x}$ is the smallest set such that
(1) $x \in \cluster{x}$,
(2) if $p \in \cluster{x} \cap P$, then ${\post{p}} \subseteq \cluster{x}$, and
(3) if $t \in \cluster{x} \cap T$, then ${\pre{t}} \subseteq \cluster{x}$.
$\cluster{N}= \{ \cluster{x} \mid x \in P \cup T\}$ is the set of clusters of $N$.
\end{definition}

Note that $\cluster{N}$ partitions the nodes in $N$.
The Petri net shown in Figure~\ref{fig-not-home} has 6 clusters:
$\cluster{N}= \{
\{p1,p2,t1,t2\},\allowbreak
\{p3,t3\},\allowbreak
\{p4,t4\},\allowbreak
\{p5,t5\},\allowbreak
\{p6,t6\},\allowbreak
\{p7,p8,t7\}\}$.

\begin{definition}[Cluster Notations]\label{def:clustnot}
Let $N=(P,T,F)$ be a Petri net and $C \in \cluster{N}$ a cluster.
$P(C) = P \cap C$ are the places in $C$, $T(C) = T \cap C$ are the transitions in $C$, and $M(C) = [p \in P(C)]$ is the smallest marking fully enabling the cluster.
\end{definition}

\begin{definition}[Subnet, P-component, T-Component]\label{def:subcomps}
Let $N=(P,T,F)$ be a Petri net and $X \subseteq P \cup T$ such that $X \neq \emptyset$.
$N\tproj_X = (P \cap X, T \cap X, F \cap (X \times X))$ is the subnet generated by $X$.
$N\tproj_X$ is a \emph{P-component} of $N$ if ${\prenet{N}{p}} \cup {\postnet{N}{p}} \subseteq X$ for $p \in X \cap P$ and $N\tproj_X$ is a strongly connected P-net.
$N\tproj_X$ is a \emph{T-component} of $N$ if ${\prenet{N}{t}} \cup {\prenet{N}{t}} \subseteq X$ for $t \in X \cap T$ and $N\tproj_X$ is a strongly connected T-net.
$\pcomp(N) = \{ X \subseteq P \cup T \mid  N\tproj_X \text{is a P-component} \}$.
$\tcomp(N) = \{ X \subseteq P \cup T \mid  N\tproj_X \text{is a T-component} \}$.
\end{definition}
\begin{figure}[t]
\centering
\includegraphics[width=12cm]{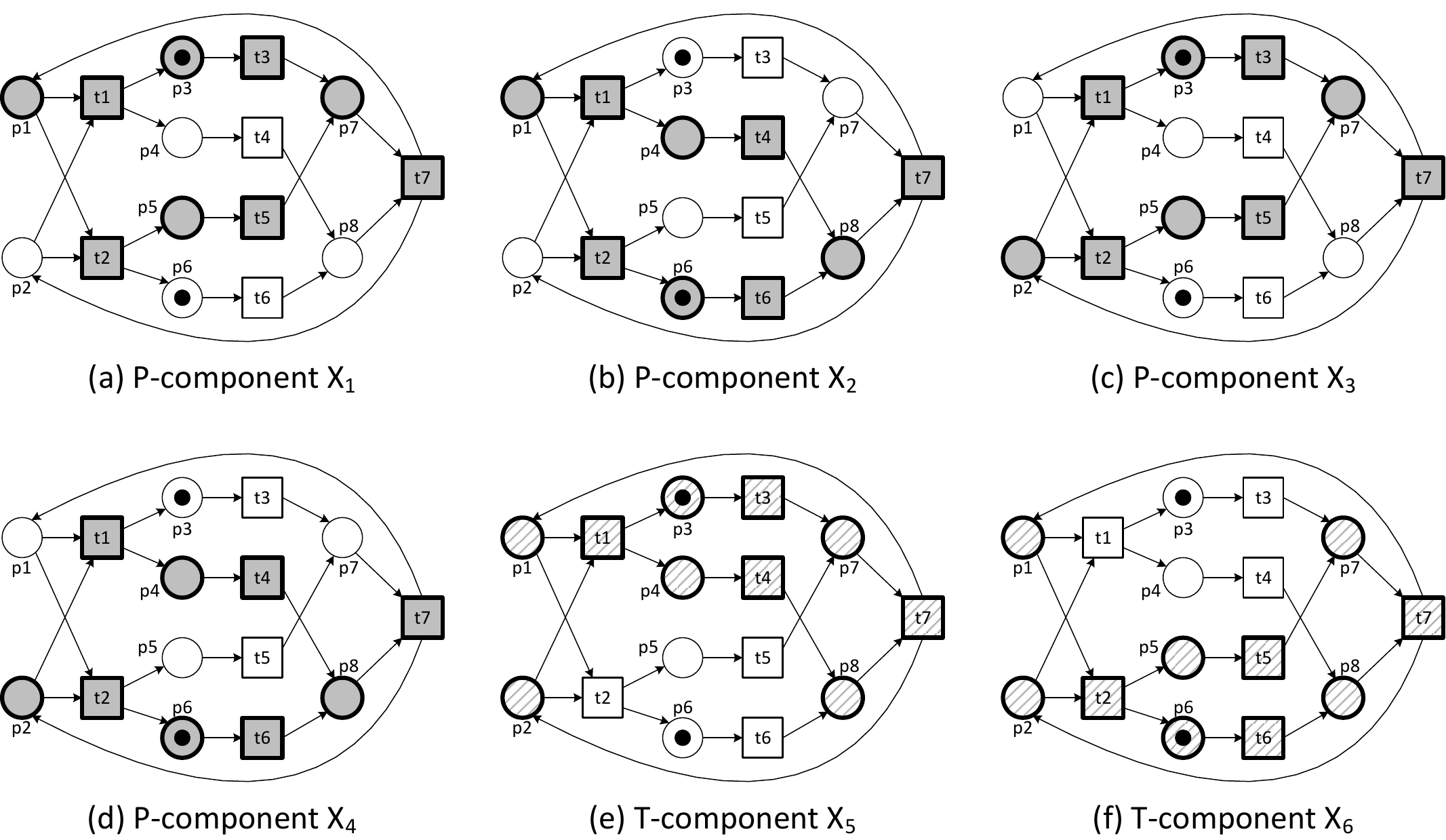}
\caption{The Petri net shown in Figure~\ref{fig-not-home} has four P-components and two T-components.}
\label{fig-not-home-cover}
\end{figure}

The Petri net shown in Figure~\ref{fig-not-home} has four P-components and two T-components (see Figure~\ref{fig-not-home-cover}). These components cover all nodes of the net.

\begin{definition}[P-cover, T-cover]\label{def:covers}
Let $N=(P,T,F)$ be a Petri net.
$N$ has a \emph{P-cover} if and only if $\bigcup \pcomp(N) = P \cup T$.\footnote{$\bigcup Q = \bigcup_{X \in Q} X$ for some set of sets $Q$.}
$N$ has a \emph{T-cover} if and only if $\bigcup \tcomp(N) = P \cup T$.
\end{definition}

Since the early seventies, it is known that well-formed free-choice nets have a P-cover and a T-cover (first shown by Michel Hack).

\begin{theorem}[Coverability Theorem \cite{deselesparza}]\label{theo:cov}
Let $N=(P,T,F)$ be a well-formed free-choice net.
$\bigcup \pcomp(N) = \bigcup \tcomp(N) = P \cup T$.
\end{theorem}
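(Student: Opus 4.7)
The plan is to prove the two coverage statements separately: every node of $N$ lies in some P-component, and every node lies in some T-component. The strategy leverages two classical facts about well-formed free-choice nets that I would establish or cite as preliminary lemmas: (i) well-formedness entails strong connectivity of $N$ together with the existence of positive P- and T-semiflows with full support, and (ii) Commoner's theorem, which characterizes liveness of a free-choice net by the condition that every proper siphon contains a proper trap that can be marked.

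For the P-cover, given any place $p \in P$, I would use well-formedness to obtain a minimal positive P-semiflow $y$ with $p \in \mi{supp}(y)$. Let $R = \mi{supp}(y) \subseteq P$; since $y$ is a P-semiflow, $R$ is simultaneously a siphon and a trap, so $\pre{R} = \post{R}$. Set $X = R \cup \pre{R}$. The essential check is that $N\tproj_X$ is a strongly connected state machine. The state-machine property, i.e., each $t \in X$ having exactly one input and one output place inside $R$, is forced by the free-choice property together with the minimality of $y$: if some transition had two input places in $R$, their postsets would coincide by free-choice and a strictly smaller positive P-semiflow could be extracted, contradicting minimality. Strong connectivity of the restriction then follows from strong connectivity of $N$ by projecting paths onto $X$ using the free-choice property at each branching. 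Every transition is covered because every $t \in T$ has at least one input place lying in the support of some such minimal P-semiflow, giving $\bigcup \pcomp(N) \supseteq P \cup T$.

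For the T-cover, the argument is structurally dual. Given $t \in T$, I would use liveness of some witnessing marking to produce a minimal positive T-semiflow $x$ with $t \in \mi{supp}(x)$, set $S = \mi{supp}(x)$, and define $Y = S \cup \pre{S} = S \cup \post{S}$. The free-choice property combined with minimality of $x$ forces $N\tproj_Y$ to be a strongly connected marked graph, hence a T-component containing $t$. Place coverage follows because every $p \in P$ in a well-formed net is adjacent to a transition appearing in the support of some minimal T-semiflow.

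The step I expect to be the main obstacle is verifying that the subnet generated by the support of a minimal positive P- (respectively T-) semiflow is genuinely a \emph{strongly connected} P-net (respectively T-net). Getting the state-machine or marked-graph shape of the restriction is relatively clean under the free-choice condition, but the strong-connectivity half is what typically breaks outside the free-choice world; it is essentially the content of Hack's original 1972 result and its siphon-trap refinement. Any rigorous proof would isolate this combinatorial fact as a self-contained lemma rather than unfold it inline, and would lean heavily on the interplay between minimality of the invariant, free-choice, and Commoner's theorem to rule out the pathological configurations.
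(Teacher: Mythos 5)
The paper does not actually prove Theorem~\ref{theo:cov}: it is Hack's classical coverability result, quoted from \cite{deselesparza}, so your attempt has to be measured against the standard proof in the literature rather than against anything in the paper. Measured that way, the proposal has a genuine gap at exactly the point you flag as ``the main obstacle'', and the justification you offer for it does not work. To show that the subnet generated by the support $R$ of a minimal positive P-semiflow is a P-component, you argue that a transition $t$ with two input places $p_1,p_2\in R$ would allow ``extracting a strictly smaller positive P-semiflow''. Free-choice only gives ${\post{p_1}}={\post{p_2}}$; to build a smaller invariant (e.g.\ by shifting the weight of $p_2$ onto $p_1$) you would need the full incidence columns of $p_1$ and $p_2$ to coincide, i.e.\ also ${\pre{p_1}}={\pre{p_2}}$, which free-choice does not provide --- the modified vector violates the invariant equation at every transition where the presets of $p_1$ and $p_2$ differ. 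Similarly, strong connectivity of $N\tproj_X$ does not follow by ``projecting paths of $N$ onto $X$'': a path between two nodes of $X$ may leave $X$, and its projection need not be a path of the subnet. So the lemma you propose to isolate (support of a minimal semiflow generates a strongly connected P-net, resp.\ T-net) is not a routine verification; it essentially \emph{is} the theorem, and the classical treatment needs a chain of nontrivial lemmas about minimal siphons, allocations and Commoner's theorem to establish it.

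There is also a circularity problem in the inputs you treat as preliminary facts. That every place of a well-formed free-choice net lies in the support of a positive P-semiflow, and every transition in the support of a positive T-semiflow (conservativeness and consistency), is in the standard development a \emph{consequence} of the coverability theorems (each P-component induces an S-invariant, a P-cover yields a positive one) or of the Rank Theorem, which comes later and itself uses coverability; assuming them as lemmas without an independent derivation begs the question. The same applies to the T-half: treating it as ``structurally dual'' implicitly invokes the Duality Theorem for well-formed free-choice nets, which again rests on the coverability results. Unless you supply independent proofs of these ingredients and of the key support-generates-component lemma, the proposal is a plausible roadmap but not a proof.
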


Moreover, for any well-formed free-choice net $N$ and marking $M$: $(N,M)$ is live if and only if every P-component is marked in $M$ (Theorem 5.8 in \cite{deselesparza}).

\subsection{Workflow Nets}
\label{sec:wf-nets}

In the context of business process management, workflow automation, and process mining,
often a subclass of Petri nets is considered where each net has a unique source place $i$ and a unique sink place $o$ \cite{aaljcsc}.

\begin{definition}[Workflow net]\label{def:wf-net}
Let $N=(P,T,F)$ be a Petri net.
$N$ is a workflow net if there are places $i,o \in P$ such that ${\pre{i}} = \emptyset$, ${\post{o}} = \emptyset$, and all nodes $P \cup T$ are on a path from $i$ to $o$.
Given a workflow net $N$, the short-circuited net is $\overline{N}=(P,T \cup \{t^*\},F\cup \{(t^*,i),(o,t^*)\})$.
\end{definition}

The short-circuited net is strongly connected. Different notions of soundness have been defined \cite{soundness-FACS}.
Here we only consider classical soundness \cite{aaljcsc}.

\begin{definition}[Sound]\label{def:sound}
Let $N=(P,T,F)$ be a workflow net with source place $i$. $N$ is sound if and only if $(\overline{N},[i])$ is live and bounded.
\end{definition}

Note that soundness implies that starting from the initial state (just a token in place $i$),
it is always possible to reach the state with one token in place $o$ (marking $[o]$).
Moreover, after a token is put in place $o$, all the other places
need to be empty. Finally, there are no dead transitions (each transition can become enabled).
\begin{figure}[t]
\centering
\includegraphics[width=8.5cm]{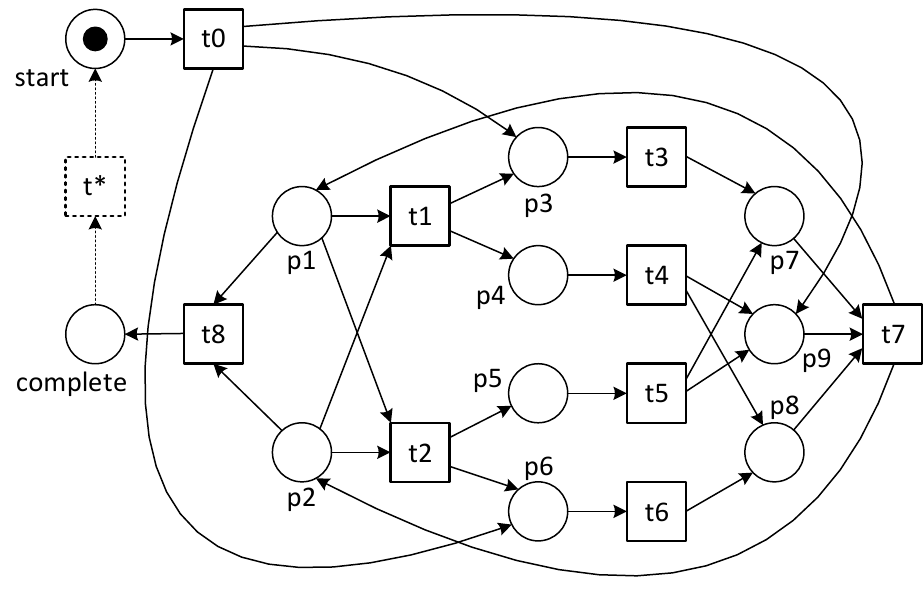}
\caption{The free-choice net without transition $t^*$ is a workflow net.
The short-circuited net is perpetual, i.e., live, bounded, and having a home cluster (e.g., $\{\mi{start},t0\}$).
The short-circuited net is also lucent.}
\label{fig-not-home-wf-net}
\end{figure}

Figure~\ref{fig-not-home-wf-net} shows a sound workflow net. By adding transition $t^*$ the net is short-circuited.
The short-circuited net is live, safe, and cyclic.

\subsection{Uniqueness of Blocking Markings in Free-Choice Nets}
\label{sec:block}

A \emph{blocking marking} is a marking where all transitions in a particular cluster are enabled while all others are disabled.
For example, in Figure~\ref{fig-not-home}, $[p3,p6]$ is not a blocking marking, but $[p3,p8]$, $[p6,p7]$, and $[p7,p8]$ are examples of blocking markings.

\begin{definition}[Blocking Marking]\label{def:bm}
Let $(N,M)$ be a marked net and $C \in \cluster{N}$ a cluster.
A \emph{blocking marking} for $C$ is a marking $M_B\in R(N,M)$ such that $\mi{en}(N,M_B) = T(C)$, i.e.,
all transitions in the cluster are enabled, but no other transitions.
\end{definition}

In \cite{Genrich84} Genrich and Thiagarajan showed that unique blocking markings exist for all clusters in live and safe marked graphs.
This was generalized by Gaujal, Haar, and Mairesse in \cite{blocking-theorem-gaujala2003} where they showed that
blocking markings exist and are unique in live and bounded free-choice nets.
Note that in a free-choice net all transitions in the cluster are enabled simultaneously (or all are disabled).
There is one unique marking in which precisely one cluster is enabled. Moreover, one can reach this marking without firing
transitions from the cluster that needs to become enabled.
A simplified proof was given in \cite{blocking-theorem-wehler2010} and
another proof sketch was provided in \cite{structure-theory-ToPNoC-advanced-course2010}.

\begin{theorem}[Existence and Uniqueness of Blocking Markings \cite{blocking-theorem-gaujala2003}]\label{theo:bm}
Let $(N,M)$ live and bounded free-choice net and $C \in \cluster{N}$ a cluster.
There exists a unique blocking marking for $C$ reachable from $(N,M)$, denoted by $B_{(N,M)}^{C}$.
Moreover, there exists a firing sequence $\sigma \in (T \setminus C)^*$ such that $(N,M)[\sigma\rangle (N,B_{(N,M)}^{C})$.
\end{theorem}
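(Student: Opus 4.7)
The plan is to establish existence and uniqueness separately, leaning on the coverability theorem and on the free-choice property.

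\textbf{Existence.} First I would consider $R_C = \{M' \mid \exists \sigma \in (T \setminus C)^* \colon (N,M)[\sigma\rangle (N,M')\}$, the markings reachable from $M$ without ever firing a transition of the cluster $C$. Boundedness makes $R_C$ finite. The goal is to exhibit some $M_B \in R_C$ at which no transition in $T \setminus C$ is enabled. I would argue by contradiction: if every element of $R_C$ enabled some transition outside $C$, one could build an infinite firing sequence inside $(T \setminus C)^*$. Using liveness together with the T-cover from Theorem~\ref{theo:cov} (every transition sits in some strongly connected T-component, which eventually has to route tokens through each of its clusters), one derives a contradiction with the fact that $C$ is never fired. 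Once $M_B$ is obtained, liveness forces at least one transition to be enabled there, and since all transitions of $T \setminus C$ are disabled, the enabled transition must lie in $T(C)$. The free-choice property then upgrades this to enabling \emph{all} of $T(C)$, since transitions in a cluster share the preset $P(C)$. Thus $M_B$ is a blocking marking for $C$ and is reached via $\sigma \in (T \setminus C)^*$.

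\textbf{Uniqueness.} Suppose $M_1, M_2$ are both blocking markings for $C$. Since both enable $T(C)$, free-choice forces both to mark every place in $P(C)$. To show they also agree on $P \setminus P(C)$, I would invoke Theorem~\ref{theo:cov} to obtain a P-cover $\bigcup \pcomp(N) = P \cup T$. For any P-component $N \tproj_X$, every transition has a unique input place, so the set of enabled transitions in $T \cap X$ --- equal to $T(C) \cap X$ under both $M_1$ and $M_2$ --- determines which places of $P \cap X$ must carry tokens (namely, the input places of those enabled transitions). Combined with the constant token count per P-component (a P-invariant fixed by $M$), this pins down $M_1 \tproj_{P \cap X} = M_2 \tproj_{P \cap X}$ for every P-component. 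Since the P-components cover all places, $M_1 = M_2$.

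\textbf{Main obstacle.} The hardest step is the existence argument: ruling out an infinite firing sequence in $(T \setminus C)^*$. Boundedness alone only tells us the sequence revisits markings, so one needs a subtler argument combining liveness with the T-cover --- intuitively, any T-component that meets $C$ must eventually need to fire a transition of $C$, contradicting an infinite schedule that avoids $C$. The uniqueness side is much cleaner, but the step claiming ``enabled transitions pin down the marking on a P-component'' relies on the P-component being a strongly connected state machine with a fixed token count, which is exactly what well-formedness of the free-choice net delivers.
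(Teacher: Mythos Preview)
The paper does not prove this theorem at all: it is quoted as a known result of Gaujal, Haar, and Mairesse, with pointers to simplified proofs by Wehler and in Best's survey, and is then used as a black box in the proof of Theorem~\ref{theo:unmark}. So there is no ``paper's own proof'' to compare against; your proposal is an attempt to reprove a result the paper deliberately imports.

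That said, your uniqueness argument has a genuine gap. From a blocking marking $M_B$ and a P-component $X$ you only learn that the input places (within $X$) of the transitions in $T(C)\cap X$ are marked. This does \emph{not} pin down $M_B\tproj_{P\cap X}$ in two situations. First, if $T(C)\cap X=\emptyset$ --- which is perfectly possible, since a P-component need not meet every cluster --- you learn nothing whatsoever about the token in $X$. Second, even if $T(C)\cap X\neq\emptyset$, the P-invariant only fixes the \emph{total} number of tokens in $X$, not their positions; if that total exceeds the number of forced places you have identified, the remaining tokens are unconstrained by your argument. Live and bounded free-choice nets are not locally safe in general (the paper's Figure~\ref{fig-not-loc-safe} is exactly such an example, with a P-component carrying two tokens), so this case is not vacuous. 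The actual uniqueness proofs in the cited references are substantially more delicate and do not reduce to a P-cover counting argument.

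Your existence sketch is closer in spirit to the standard argument, and you rightly flag it as the hard part, but the step ``an infinite firing sequence in $(T\setminus C)^*$ contradicts liveness plus T-cover'' is not an argument yet. Liveness says every transition \emph{can} be enabled from any reachable marking, not that it \emph{must} occur in every infinite run; you still need a structural reason why some fair or maximal schedule avoiding $C$ must terminate.
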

\begin{figure}[thb!]
\centering
\includegraphics[width=7.0cm]{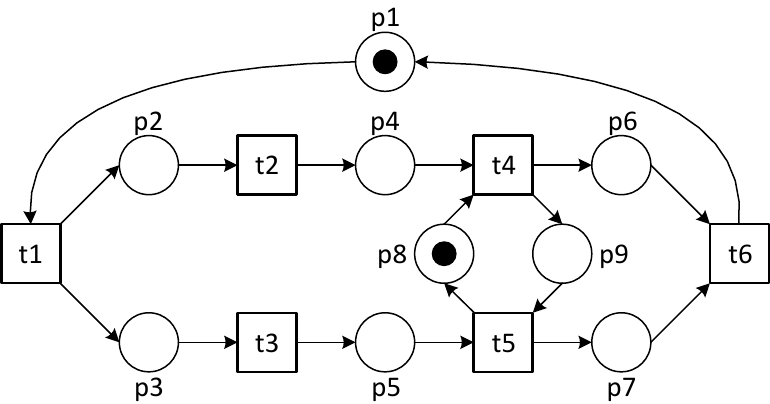}
\caption{A live and safe marked free-choice net that is not locally safe and not perpetual. Nevertheless, the net is lucent.}
\label{fig-not-loc-safe}
\end{figure}

The free-choice net in Figure~\ref{fig-not-loc-safe} is live and bounded.
Hence, each cluster has a unique blocking marking.
The unique blocking marking of the cluster $\{p2,t2\}$ is $[p2,p5]$.
The unique blocking marking of the cluster $\{p6,p7,t6\}$ is $[p6,p7,p8]$.

The free-choice net in Figure~\ref{fig-alternating} has three clusters. The three blocking markings are $[p1,p2]$, $[p1,p3]$, and $[p2,p4]$. Marking $[p3,p4]$ is not a blocking marking because transitions from different clusters are enabled.
\begin{figure}[thb!]
\centering
\includegraphics[width=7.5cm]{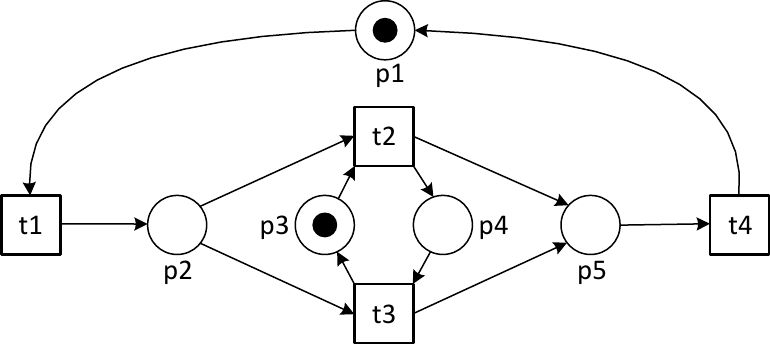}
\caption{A live and locally safe non-free-choice net.
Cluster $\{p1,t1\}$ has two reachable blocking markings $M_1 = [p1,p3]$ and $M_2 = [p1,p4]$.
Also cluster $\{p5,t4\}$ has two reachable blocking markings $M_3 = [p3,p5]$ and $M_4 = [p4,p5]$.}
\label{fig-nfc}
\end{figure}

Figure~\ref{fig-nfc} illustrates that the free-choice property is essential in Theorem~\ref{theo:bm}.
Cluster $C_1 = \{p1,t1\}$ has two reachable blocking markings $M_1 = [p1,p3]$ and $M_2 = [p1,p4]$.
Cluster $C_2 = \{p5,t4\}$ also has two reachable blocking markings $M_3 = [p3,p5]$ and $M_4 = [p4,p5]$.

\section{Lucency}
\label{sec:lucency}

This paper focuses on the question whether markings can be uniquely identified based on the transitions they enable.
Given a marked Petri net we would like to know whether each reachable marking has a unique ``footprint'' in terms of the transitions it enables.
If this is the case, then the net is \emph{lucent}.

\begin{definition}[Lucent]\label{def:lucent}
Let $(N,M)$ be a marked Petri net. $(N,M)$ is lucent if and only if for any $M_1,M_2 \in R(N,M)$: $\mi{en}(N,M_1)=\mi{en}(N,M_2)$ implies $M_1 = M_2$.
\end{definition}

The marked Petri net in Figure~\ref{fig-alternating} is lucent because each of the four reachable markings has a unique footprint in terms of the set of enabled transitions.
The marked Petri net shown in Figure~\ref{fig-intro-nfc} is not lucent because there are two markings $M_1 = [p2,p5]$ and $M_2 = [p2,p6]$ with $\mi{en}(N,M_1) = \mi{en}(N,M_2) = \{t3\}$ and $M_1 \neq M_2$.
The marked Petri nets in figures~\ref{fig-not-home}, \ref{fig-not-home-wf-net}, and \ref{fig-not-loc-safe} are lucent.
The non-free-choice net in Figure~\ref{fig-nfc} is not lucent (markings $[p3,p5]$ and $[p4,p5]$ enable $t4$, and $[p1,p3]$ and $[p1,p4]$ enable $t1$).
Figure~\ref{fig-fc-nonlucid} shows a free-choice net that is also not lucent (markings $[p3,p7,p8]$ and $[p3,p5,p7]$ both enable $\{t1,t4\}$).
\begin{figure}[thbp]
\centering
\includegraphics[width=7.5cm]{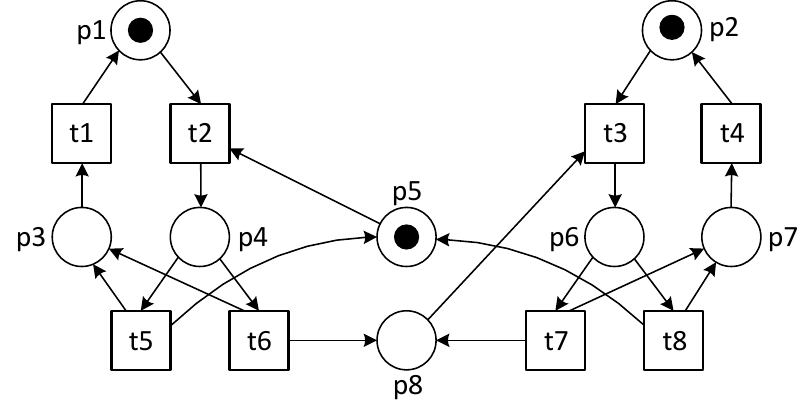}
\caption{A live and locally safe free-choice net that is not lucent because reachable markings $[p3,p7,p8]$ and $[p3,p5,p7]$ both enable $t1$ and $t4$.}
\label{fig-fc-nonlucid}
\end{figure}

\begin{lemma}\label{lemextra}
Let $(N,M)$ be a lucent marked Petri net. $(N,M)$ is bounded and each cluster has at most one blocking marking.
\end{lemma}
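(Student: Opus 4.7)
The plan is to handle the two conclusions separately, since the harder one (boundedness) is a contrapositive argument based on the classical coverability/monotonicity machinery for Petri nets, while the uniqueness of blocking markings is essentially a one-line appeal to the definition of lucency.

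For the second claim, I would simply observe that if $M_B$ and $M_B'$ were two distinct blocking markings for the same cluster $C$, then by Definition~\ref{def:bm} both satisfy $\mi{en}(N,M_B)=\mi{en}(N,M_B')=T(C)$, and both are reachable from $(N,M)$. Lucency then forces $M_B = M_B'$. So at most one blocking marking per cluster.

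For boundedness I would argue by contradiction. Assume $(N,M)$ is unbounded. By the standard coverability argument, there is a firing sequence $(N,M)[\sigma_1\rangle(N,M_1)[\sigma_2\rangle(N,M_2)$ with $M_1 < M_2$ (componentwise, and strictly). By monotonicity of the firing rule in Petri nets, $\sigma_2$ can be repeatedly fired from each successive marking, producing an infinite sequence $M_1 < M_2 < M_3 < \cdots$ of reachable markings with $M_{i+1}-M_i = M_2 - M_1$ for every $i\geq 1$. Since enabledness of a transition $t$ depends only on whether every place in ${\pre{t}}$ is marked, $\mi{en}(N,M_i)$ is determined by the support $\{p\in P \mid M_i(p)\geq 1\}$. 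The supports form a non-decreasing chain of subsets of the finite set $P$, so they stabilize at some index $k$; for all $i\geq k$ we then have $\mi{en}(N,M_i)=\mi{en}(N,M_k)$ while $M_i\neq M_k$. This contradicts lucency, so $(N,M)$ must be bounded.

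The only step that requires any care is the construction of the strictly increasing chain of reachable markings from unboundedness; the rest is either the monotonicity of Petri nets or pigeonholing on the finite place set, and the blocking-marking half is immediate from the definitions. I do not expect a real obstacle here — the lemma is mostly a sanity check that lucency is already a strong property.
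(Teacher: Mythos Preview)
Your proposal is correct and follows essentially the same approach as the paper: the blocking-marking half is the same one-line appeal to the definition of lucency, and the boundedness half is the same contrapositive via the strictly increasing chain $M_1 < M_2 < \cdots$ obtained by iterating the pumping segment, together with stabilization of the set of marked places. Your write-up is in fact slightly more explicit than the paper's about why the supports stabilize and why stabilized supports imply equal enabled sets, but the underlying argument is identical.
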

\begin{proof}
Assume that $(N,M)$ is both lucent and unbounded. We will show that this leads to a contradiction.
Since $(N,M)$ is unbounded, we can find markings $M_1$ and $M_2$ and sequences $\sigma_0$ and $\sigma$ such that $(N,M) [\sigma_0\rangle (N,M_1) [\sigma\rangle (N,M_2)$ and $M_2$ is strictly larger than $M_1$.
This implies that we can repeatedly execute $\sigma$ getting increasingly larger markings:
$(N,M_2) [\sigma\rangle (N,M_3) [\sigma\rangle (N,M_4) [\sigma\rangle (N,M_5) \ldots$.
At some stage, say at $M_k$, the set of places that are marked stabilizes. However, the number of tokens in some places continues to increase in $M_{k+1}$, $M_{k+2}$, etc.
Hence, we find markings that enable the same set of transitions but that are not the same. For example, $M_{k+1} \neq M_{k+2}$  and $\mi{en}(N,M_{k+1})=\mi{en}(N,M_{k+2})$. Hence, the net cannot be lucent.

Take any cluster $C$ and assume that $(N,M)$ has two different reachable blocking markings $M_1$ and $M_2$.
This means that $\mi{en}(N,M_{1}) = \mi{en}(N,M_{2}) = C \cap T$.
Hence, $(N,M)$ could not be lucent, yielding a contradiction again.
\qed
\end{proof}

We would like to find subclasses of nets that are guaranteed to be lucent based on their structure.
Theorem~\ref{theo:bm} and the fact that lucency implies the existence unique blocking markings,
suggest considering live and bounded free-choice nets.
However, as the example in Figure~\ref{fig-fc-nonlucid} shows, this is not sufficient.

\section{Characterizing Markings of Perpetual Free-Choice Nets}
\label{sec:charmark}

Theorem~\ref{theo:bm} only considers blocking markings, but illustrates that the free-choice property is important for lucency.
Consider for example Figure~\ref{fig-nfc}. $M_1 = [p1,p3]$ and $M_2 = [p1,p4]$ both enable $t1$. $M_3 = [p3,p5]$ and $M_4 = [p4,p5]$ both enable $t4$.
Obviously, the property does not hold for non-free-choice nets even when they are live, safe, cyclic, etc.
However, as Figure~\ref{fig-fc-nonlucid} shows, the property also does not need to hold for free-choice nets even when they are live, safe, and cyclic.
Yet, we are interested in the class of nets for which all reachable markings have a unique ``footprint'' in terms of the transitions they enable. Therefore, we introduce the class of \emph{perpetual nets}. These nets have a ``regeneration point'' involving a so-called ``home cluster''.

\subsection{Perpetual Marked Nets}
\label{sec:def}

A \emph{home cluster} is a cluster corresponding to a home marking, i.e.,
the places of the cluster can be marked over and over again while all places outside the cluster are empty.

\begin{definition}[Home Cluster]\label{def:homeclust}
Let $(N,M)$ be a marked net with $N=(P,T,\allowbreak F)$ and $C \in \cluster{N}$ a cluster of $N$.
$C$ is a \emph{home cluster} of $(N,M)$ if $M(C)$ is a home marking, i.e., for every reachable marking $M' \in R(N,M)$: $M(C) \in R(N,M')$.
\end{definition}

Consider the marked net in Figure~\ref{fig-alternating}. There are three clusters:
$C_1 = \{p1,p2,\allowbreak t1,t2\}$, $C_2 = \{p3,t3\}$, and $C_3 = \{p4,t4\}$.
$C_1$ is a home cluster because $M(C_1)=[p1,p2]$ is a home marking.
$C_2$ is not a home cluster because $M(C_2)=[p3]$ is not a home marking.
$C_3$ is also not a home cluster because $M(C_3)=[p4]$ is not a home marking.

The marked net in Figure~\ref{fig-nfc} also has three clusters: $C_1 = \{p1,t1\}$, $C_2 = \{p2,p3,p4,\allowbreak t2,t3\}$,  and $C_3 = \{p5,t4\}$.
Since $[p1]$, $[p2,p3,p4]$, and $[p5]$ are not home markings, the net has no home clusters.

Nets that are live, bounded, and have at least one home cluster are called \emph{perpetual}.

\begin{definition}[Perpetual Marked Net]\label{def:perpetmn}
A marked net $(N,M)$  is perpetual if it is live, bounded, and has a home cluster.
\end{definition}

The marked Petri nets in figures~\ref{fig-alternating}, \ref{fig-intro-nfc}, \ref{fig-not-home}, and \ref{fig-not-home-wf-net} are perpetual.
The nets in figures~\ref{fig-not-loc-safe}, \ref{fig-nfc}, and \ref{fig-fc-nonlucid} are not perpetual.
Home clusters can be viewed as ``regeneration points'' because the net is always able to revisit a state marking a single cluster.

\begin{lemma}[Sound Workflow Nets are Perpetual]\label{lem}
Let $N$ be a sound workflow net with source place $i$. The short-circuited marked net $(\overline{N},[i])$ is perpetual.
\end{lemma}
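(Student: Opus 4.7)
The plan is to verify the three defining properties of \emph{perpetual} in turn. Liveness and boundedness of $(\overline{N},[i])$ are immediate from Definition~\ref{def:sound}, which is exactly the statement that $N$ is sound iff $(\overline{N},[i])$ is live and bounded. So the real task is to exhibit a home cluster.

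My candidate is $\cluster{o}$, the cluster of the sink place in $\overline{N}$. First I would compute this cluster explicitly: since $o$ has no outgoing arcs in $N$, its only outgoing arc in $\overline{N}$ is $(o,t^*)$, so $\postnet{\overline{N}}{o} = \{t^*\}$, and by construction $\prenet{\overline{N}}{t^*} = \{o\}$. The closure rules of Definition~\ref{def:clust} then force $\cluster{o} = \{o,t^*\}$, so $M(\cluster{o}) = [o]$. The claim therefore reduces to showing that $[o]$ is a home marking of $(\overline{N},[i])$.

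To prove this, I would invoke the classical characterization of sound workflow nets: $(\overline{N},[i])$ being live and bounded is equivalent to \emph{option to complete} ($[o] \in R(N,M')$ for every $M' \in R(N,[i])$), \emph{proper completion} (any $M' \in R(N,[i])$ with $M'(o) \geq 1$ already equals $[o]$), and absence of dead transitions. Proper completion lets me show by induction on firing-sequence length that $R(\overline{N},[i]) = R(N,[i])$: the only reachable markings of $\overline{N}$ that enable $t^*$ equal $[o]$, and firing $t^*$ there returns the net to $[i]$, so splicing $t^*$-steps in or out never escapes $R(N,[i])$. Option to complete then delivers $[o] \in R(N,M') \subseteq R(\overline{N},M')$ for every $M' \in R(\overline{N},[i])$, so $\cluster{o}$ is a home cluster.

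The main obstacle is the appeal to the classical option-to-complete and proper-completion properties, which are equivalent to the definition of soundness used here but are not recalled among the preliminaries of this paper. In a self-contained write-up I would either cite the well-known workflow-net result or rederive the two properties directly from liveness and boundedness of $(\overline{N},[i])$: liveness of $t^*$ guarantees a reachable successor that marks $o$, and a place-invariant or structural induction on the workflow net upgrades that successor to the marking $[o]$ itself.
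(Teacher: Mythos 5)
Your proposal is correct, but it fixes a different home cluster than the paper does and leans on different supporting facts. The paper takes the cluster of the \emph{source} place, $\{i\}\cup\post{i}$: liveness of $t^*$ lets it fire from any reachable marking, boundedness forces the resulting marking to be exactly $[i]$ (any surplus token would allow pumping and contradict boundedness), and an extra structural remark is needed to see that this set really is a cluster, namely that a transition in $\post{i}$ with a second input place could never fire and would be dead. You instead take the cluster of the \emph{sink} place, $\{o,t^*\}$, which is trivially a cluster (since $\post{o}=\{t^*\}$ and $\pre{t^*}=\{o\}$ by construction), and you pay for that convenience elsewhere: you need option to complete and proper completion — the classical consequences of soundness that the paper only states informally after Definition~\ref{def:sound} — plus the induction showing $R(\overline{N},[i])=R(N,[i])$. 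That induction is not strictly necessary: from any $M'\in R(\overline{N},[i])$, liveness of $t^*$ yields some $M''\in R(\overline{N},M')$ enabling $t^*$, and the same boundedness/monotonicity argument the paper uses (if $M''$ strictly dominated $[o]$, firing $t^*$ would give a marking strictly dominating $[i]$, hence unboundedness) gives $M''=[o]$ directly. So your route is sound and arguably cleaner on the cluster-identification side, while the paper's is more self-contained on the home-marking side; do make the option-to-complete/proper-completion step explicit (cite the classical equivalence or run the monotonicity argument) rather than leaving it as a vague appeal to a place invariant.
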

\begin{proof}
Soundness implies that $(\overline{N},[i])$ is live and bounded.
Moreover, $[i]$ is a home cluster. It is always possible to enable and fire $t^*$ due to liveness.
After firing $t^*$, place $i$ is marked and there can be no other tokens because otherwise $(\overline{N},[i])$ would be unbounded.
Hence, $[i]$ is a home marking. $\{i\} \cup {\post{i}}$ is a cluster because the transitions in ${\post{i}}$ cannot have additional input places (otherwise they would be dead).
\qed
\end{proof}

Next to workflow nets, there are many classes of nets that have a ``regeneration point'' (i.e., home cluster).
For example, process models discovered by discovery algorithms often have a well-defined start and end point.
By short-circuiting such nets, one gets home clusters.

\subsection{Local Safeness}
\label{sec:locsafe}

It is easy to see that non-safe Petri nets are likely to have different markings enabling the same set of transitions.
In fact, we need a stronger property that holds for perpetual marked free-choice nets: \emph{local safeness}.
Local safeness is the property that each P-component is safe (i.e., the sum of all tokens in the component cannot exceed 1).

\begin{definition}[Locally Safe]\label{lemm:locallysafe}
Let $(N,M)$ be a marked P-coverable net.
$(N,M)$ is \emph{locally safe} if all P-components are safe,
i.e., for any P-component $X \in \pcomp(N)$ and reachable marking $M' \in R(N,M)$: $\sum_{p\in X\cap P} M'(p) \leq 1$.
\end{definition}

Note that a safe marked P-coverable net does not need to be locally safe.
Consider for example the marked net in Figure~\ref{fig-not-loc-safe}.
The net is safe, but the P-component $\{p1,p3,p5,p8,p6,t1,t3,t4,t5,t6\}$ has two tokens.
However, all perpetual marked free-choice nets are locally safe.

\begin{lemma}[Perpetual Marked Free-Choice Nets Are Locally Safe]\label{lemm:hmls}
Let $(N,M)$ be a perpetual marked free-choice net.
$(N,M)$ is locally safe.
\end{lemma}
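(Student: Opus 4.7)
The plan is to fix a P-component $X$ of $N$, show that the total number of tokens in $X$ is a firing invariant, and then evaluate this invariant at the home marking $M(C)$ where $C$ is the home cluster, showing that it is at most $1$. Since $(N,M)$ is live and bounded, $N$ is well-formed, and Theorem~\ref{theo:cov} supplies a P-cover, so it suffices to prove local safeness for an arbitrary $X \in \pcomp(N)$. Write $\tau_X(M') = \sum_{p \in X \cap P} M'(p)$. To see that $\tau_X$ is firing-invariant, consider any transition $t$ fired in $N$. If $t \notin X$, then for every $p \in X \cap P$ the closure condition $\pre{p} \cup \post{p} \subseteq X$ excludes $t$, so $p \notin \pre{t} \cup \post{t}$ and $t$ does not change the marking of any place in $X$. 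If $t \in X$, then because $N\tproj_X$ is a P-net, $t$ has exactly one input place and one output place inside $X$, so firing removes exactly one token from $X$ and returns exactly one.

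The crux is bounding $\tau_X(M(C)) = |X \cap P(C)|$ by $1$. Suppose, toward a contradiction, that $p_1 \neq p_2$ both lie in $X \cap P(C)$. First I would observe that $T(C) \neq \emptyset$: otherwise no transition is enabled at $M(C)$, and since $M(C)$ is reachable from $M$, liveness would fail. Because $N$ is free-choice and $C$ is a cluster with $T(C) \neq \emptyset$, every $p \in P(C)$ satisfies $\post{p} = T(C)$; in particular $T(C) = \post{p_1} \subseteq X$ by the P-component closure on places. Picking any $t \in T(C) \subseteq X$ yields $\pre{t} \cap X \supseteq \{p_1,p_2\}$, which contradicts $N\tproj_X$ being a P-net (each transition of a P-net has exactly one input place in the component). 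Hence $|X \cap P(C)| \leq 1$.

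Combining the two steps finishes the argument: for every reachable $M' \in R(N,M)$, the home cluster property gives $M(C) \in R(N,M')$, and firing-invariance of $\tau_X$ along the witnessing firing sequence yields $\tau_X(M') = \tau_X(M(C)) \leq 1$. Since $X$ was arbitrary, every P-component is safe, which is exactly local safeness. I expect the main obstacle to be the intersection step, where three ingredients must cooperate smoothly: the free-choice property to identify $\post{p_1}$ and $\post{p_2}$ with $T(C)$, the place-closure condition of a P-component to force $T(C) \subseteq X$, and the P-net property of $N\tproj_X$ to forbid two input places in $X$ for a single transition. The auxiliary observation that a live perpetual net cannot have $T(C) = \emptyset$ for its home cluster is a small but essential prerequisite that lets the free-choice identification of post-sets go through.
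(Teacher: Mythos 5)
Your proof is correct and follows essentially the same route as the paper's: token-count invariance within each P-component, evaluated at the home marking $M(C)$, combined with the key observation that a P-component cannot contain two places of the cluster $C$ because each of its transitions has exactly one input place inside the component. The only (minor) difference is that the paper invokes Theorem 5.8 of \cite{deselesparza} to conclude every P-component carries exactly one token, whereas you establish only the upper bound $\leq 1$ directly (using liveness merely to get $T(C)\neq\emptyset$ and the free-choice cluster structure $\post{p}=T(C)$), which indeed suffices for local safeness.
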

\begin{proof}
Since $(N,M)$ is perpetual, therefore it is live, bounded, and has a home cluster $C$.
$N$ is well-formed and therefore has a P-cover.
A bounded well-formed free-choice net is only live if every P-component is initially marked (see Theorem 5.8 in \cite{deselesparza}).
Hence, also in home marking $M(C)$ the P-components are marked (the number of tokens is invariant).
Therefore, in any P-component one of the places in $P(C)$ appears. There cannot be two places from cluster $C$ in the same P-component
(this would violate the requirement that transitions in a P-component have precisely one input place).
Hence, each P-component is marked with precisely one token and this number is invariant for all reachable markings.
Hence, $(N,M)$ is locally safe.
\qed
\end{proof}

The nets in figures~\ref{fig-alternating}, \ref{fig-not-home}, and \ref{fig-not-home-wf-net} are free-choice and perpetual and therefore also locally safe.
The net in Figure~\ref{fig-intro-nfc} is locally safe and perpetual, but not free-choice.
The marked Petri net in Figure~\ref{fig-not-loc-safe} is not perpetual and also not locally safe.
Figure~\ref{fig-fc-nonlucid} shows that there are free-choice nets that are live and locally safe, but not perpetual.

\subsection{Realizable Paths}
\label{sec:relpath}

Free-choice nets have many interesting properties showing that the structure reveals information about the behavior of the net \cite{deselesparza}.
Tokens can basically ``decide where to go'', therefore such nets are called free-choice.

The following lemma from \cite{aalcontrolflowacta} shows that tokens can follow an elementary path in a live and bounded free-choice net where the initial marking marks a single place and that is a home marking.

\begin{lemma}[Realizable Paths in Cyclic Free-Choice Nets \cite{aalcontrolflowacta}]\label{lemm:old}
Let $(N,M)$ be a live, bounded, and cyclic marked free-choice net with $M=[p_H]$.
Let $M'$ be a reachable marking which marks place $q$ and let $\langle p_0,t_1,p_1,t_2, \ldots ,t_n,p_n \rangle$
with $p_0=q$ and $p_n=p_H$ be an elementary path in $N$.
There exists a firing sequence $\sigma$ such that
$(N,M')[\sigma\rangle (N,M)$ and each of the transitions $\{t_1, \ldots t_n\}$ is executed in the given order and none
of the intermediate markings marks $p_H$.
\end{lemma}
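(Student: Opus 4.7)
The plan is to proceed by induction on the length $n$ of the elementary path, using the blocking theorem (Theorem~\ref{theo:bm}) to traverse one step at a time, and exploiting local safeness (Lemma~\ref{lemm:hmls}) to enforce that intermediate markings avoid $p_H$. The hypotheses make $(N,M)$ perpetual with home cluster $\cluster{p_H}$, so by Lemma~\ref{lemm:hmls} together with the coverability theorem (Theorem~\ref{theo:cov}), $N$ has a P-cover and every P-component carries exactly one token in every reachable marking. Since $M=[p_H]$, the single token of each P-component sits on $p_H$ initially, so every P-component contains $p_H$. The crucial consequence is that, in any reachable marking, marking a place $p\neq p_H$ forces $p_H$ to be empty: any P-component containing $p$ also contains $p_H$ and admits only one token.

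For the base case $n=0$ we have $q=p_H$, and the P-component observation forces $M'=[p_H]$, so the empty sequence works. For the inductive step, starting from $M'$ (which marks $p_0=q$), I would first drive the net to the unique blocking marking $B$ of $\cluster{p_0}=\cluster{t_1}$ via a firing sequence $\tau$ that avoids transitions of $\cluster{p_0}$, as guaranteed by Theorem~\ref{theo:bm} applied to the (still live and bounded) marked net $(N,M')$. Transitions outside $\cluster{p_0}$ cannot consume the token on $p_0$, since ${\post{p_0}}\subseteq \cluster{p_0}$ by definition of cluster, so $p_0$ remains marked throughout $\tau$; by the P-component observation, $p_H$ therefore stays empty along $\tau$. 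At $B$ all transitions of the cluster are simultaneously enabled by the free-choice property, so firing $t_1$ yields a marking $M''$ in which $p_1$ is marked. Elementarity of the path gives $p_1\neq p_H$, so $p_H$ is empty in $M''$. Applying the induction hypothesis to $M''$ and the tail path $\langle p_1,t_2,\ldots,t_n,p_n\rangle$ produces a completing firing sequence $\sigma'$ that fires $t_2,\ldots,t_n$ in order with intermediate markings avoiding $p_H$, and $\sigma=\tau\cdot\langle t_1\rangle\cdot\sigma'$ is the desired sequence.

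The main obstacle I anticipate is the rigorous justification of the ``$p_H$ lies in every P-component'' invariant and its interplay with the blocking theorem. A priori, the sequence $\tau$ provided by Theorem~\ref{theo:bm} only promises to avoid $\cluster{p_0}$, so nothing in that theorem directly prevents $\tau$ from producing a token on $p_H$; it is precisely the combination of local safeness, the fact that $p_H$ belongs to every P-component, and the preservation of the $p_0$-token along $\tau$ that rules this out. Checking this invariant cleanly at each inductive step, and handling the degenerate edge case $n=0$ where one must argue that $M'$ really collapses to $[p_H]$ rather than some richer marking, is where the bookkeeping must be done most carefully.
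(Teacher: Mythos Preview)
The paper does not actually prove this lemma: its entire proof is the single line ``See \cite{aalcontrolflowacta}.'' So there is no in-paper argument to compare against, and your proposal supplies a self-contained proof where the paper defers to an external source. Your strategy---induction on the path length, using the blocking theorem to advance one edge at a time while the ``$p_H$ lies in every P-component'' invariant keeps $p_H$ empty---is sound and is in fact the natural line of argument; the paper's own Lemma~\ref{lemm:realizablepaths} is proved by essentially the same stepwise blocking-marking technique.

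Two small points deserve tightening. First, when you invoke Lemma~\ref{lemm:hmls} you assert that $(N,M)$ is perpetual with home cluster $\cluster{p_H}$, but this requires $M(\cluster{p_H})=[p_H]$, i.e.\ $P(\cluster{p_H})=\{p_H\}$. That is true, but it needs a one-line justification: since $(N,[p_H])$ is live, some transition is enabled at $[p_H]$, hence has preset $\{p_H\}$, and by the free-choice property every transition in $\post{p_H}$ then has preset exactly $\{p_H\}$. (Alternatively, bypass Lemma~\ref{lemm:hmls} altogether: from the P-cover and the fact that each P-component carries a constant token count which must be positive by liveness and at most one in $[p_H]$, you get the invariant directly.) Second, the sentence ``elementarity of the path gives $p_1\neq p_H$'' is literally false when $n=1$, since then $p_1=p_n=p_H$; the argument still goes through because in that case $M''$ marks $p_H$ and your P-component invariant forces $M''=[p_H]=M$, so the tail induction is the trivial base case---but you should phrase the case split explicitly. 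Neither of these is a genuine gap; with these two clarifications the proof is complete.
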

\begin{proof}
See \cite{aalcontrolflowacta}.
\qed
\end{proof}

Note that Lemma~\ref{lemm:old} refers to a subclass of perpetual marked free-choice nets.
A similar result can be obtained for P-components in a perpetual marked free-choice net.

\begin{lemma}[Realizable Paths Within P-components]\label{lemm:realizablepaths}
Let $(N,M)$ be a perpetual mark-ed free-choice net with home cluster $C$.
Let $X \in \pcomp(N)$ be the nodes of some P-component and $M' \in R(N,M)$ an arbitrary reachable marking.
For any elementary path $\langle p_0,t_1,p_1,t_2, \ldots ,t_n,p_n \rangle \in X^*$ in $N$
with $p_0 \in M'$ and $p_n \in P(C)$:
there exists a firing sequence $\sigma$ such that
$(N,M')[\sigma\rangle (N,M(C))$ and $\sigma \tproj_X = \langle t_1,t_2, \ldots ,t_n \rangle$.
\end{lemma}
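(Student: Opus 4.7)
My plan is to construct $\sigma$ iteratively along the path: between firings of the $t_i$, I will insert ``padding'' sequences that prepare the rest of the net for the next path transition without disturbing the position of the token inside the P-component $X$. A final padding will then bring the whole net to $M(C)$.

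The key preliminary observation is that by Lemma~\ref{lemm:hmls}, $(N,M)$ is locally safe, so every reachable marking places exactly one token in the places of $X$. Since $p_0 \in M' \cap X$, that unique $X$-token of $M'$ sits at $p_0$. For the inductive step, suppose I have already reached a marking $M_i \in R(N,M')$ whose unique $X$-token sits at $p_i$, after a prefix $\tau_i$ with $\tau_i\tproj_X = \langle t_1,\ldots,t_i\rangle$ (take $\tau_0 = \langle\rangle$). Let $C_{i+1} = \cluster{t_{i+1}}$. Since reachability preserves liveness and boundedness, Theorem~\ref{theo:bm} applies to $(N,M_i)$ and yields a firing sequence $\rho_i \in (T\setminus C_{i+1})^*$ reaching the blocking marking for $C_{i+1}$. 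The only transitions that could consume the token at $p_i$ lie in $\post{p_i}$; and by conditions~(2) and~(3) of the cluster definition, $p_i \in \pre{t_{i+1}} \subseteq C_{i+1}$ forces $\post{p_i} \subseteq C_{i+1}$. Hence $\rho_i$ cannot move the $X$-token and, since any transition of $X$ that could fire at all must take $p_i$ as its unique $X$-input (and therefore lie in $C_{i+1}$), I also get $\rho_i\tproj_X = \langle\rangle$. Firing $t_{i+1}$ then moves the $X$-token to $p_{i+1}$, giving $\tau_{i+1} = \tau_i \cdot \rho_i \cdot \langle t_{i+1}\rangle$.

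After $n$ iterations the resulting marking $M_n$ carries the $X$-token at $p_n \in P(C)$. To finish, I will apply Theorem~\ref{theo:bm} once more, this time to the home cluster $C$: it produces $\rho_n \in (T\setminus C)^*$ reaching the unique blocking marking $B^{C}_{(N,M_n)}$. I claim $B^{C}_{(N,M_n)} = M(C)$. Indeed, $M(C)$ is reachable from $M_n$ because $C$ is a home cluster, and $M(C)$ enables exactly $T(C)$: for $t \in T(C)$ condition~(3) gives $\pre{t}\subseteq P(C)$, while for $t\notin T(C)$ with $\pre{t}\subseteq P(C)$ condition~(2) would force $t \in C$, a contradiction. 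Uniqueness in Theorem~\ref{theo:bm} then closes the gap. Since $p_n \in P(C)$ implies $\post{p_n} \subseteq C$, the same two-line argument as before shows $\rho_n\tproj_X = \langle\rangle$. Setting $\sigma = \tau_n \cdot \rho_n$ will therefore yield $(N,M')[\sigma\rangle (N,M(C))$ with $\sigma\tproj_X = \langle t_1,\ldots,t_n\rangle$.

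The main obstacle I anticipate is the delicate interaction between the blocking-marking theorem, which only lets one avoid a single cluster at a time, and the global requirement that the $X$-token trace out a prescribed path over the entire execution. The free-choice property is what makes this work: it forces $\post{p_i} \subseteq \cluster{t_{i+1}}$, so avoiding the cluster of the \emph{next} path transition automatically protects the token's \emph{current} location, and local safeness ensures there is only one such token to worry about in $X$.
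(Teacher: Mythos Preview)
Your proof is correct and follows the same overall skeleton as the paper: use local safeness to track the unique $X$-token, advance it one step at a time along the prescribed path by inserting padding sequences that avoid $T_X$, and finish by pushing the whole net to $M(C)$ via the blocking theorem.

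The one genuine difference is how the intermediate padding sequences are obtained. You invoke Theorem~\ref{theo:bm} at every step, applying it to the cluster $\cluster{t_{i+1}}$ to obtain $\rho_i\in (T\setminus C_{i+1})^*$; the paper instead argues directly from the home-marking property: since $M(C)$ is reachable and $p_i\notin P(C)$, some firing sequence must eventually consume the token at $p_i$, and the prefix up to that point (which contains no $T_X$-transitions) enables all of $\post{p_i}$ by free-choice, in particular $t_{i+1}$. Your use of the blocking theorem is slightly heavier machinery but yields a more uniform proof (the same tool is used for the intermediate steps and the final step), and it also makes the argument that $\rho_i\tproj_X=\langle\rangle$ immediate from the cluster-avoidance guarantee rather than from a separate analysis. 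Either route works; yours is arguably cleaner.
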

\begin{proof}
Let $(P_X,T_X,F_X)$ be the P-component corresponding to $X$.
Note that $p_0$ is the only place of $P_X$ that is marked in $M'$.
Moreover, elements in $\langle p_0,t_1,p_1,t_2,\allowbreak \ldots\allowbreak ,t_n,p_n \rangle \in X^*$ are unique because the path is elementary.
In fact, the places in $\{p_0,p_1, \allowbreak \ldots ,p_n\} \subseteq P_X$ belong to different clusters because a P-component cannot have multiple places of the same cluster.
As a result also $\{t_1,t_2, \ldots t_n\} \subseteq T_X$ belong to different clusters.

If $p_0 = p_n$, then the lemma holds because $p_n \in P(C)$ is marked and we can also mark the other places in $P(C)$.
Theorem~\ref{theo:bm} can be applied such that all places in $P(C)$ can be marked without firing any transitions
in $T(C)$. In fact, there exists a sequence $\sigma_B$ such that $(N,M')[\sigma_B \rangle (N,M(C))$ and $\sigma_B \tproj_X = \langle ~ \rangle$.
$\sigma_B$ does not involve any transitions in $T_X$, because $T(C)$ transitions are not needed and all other transitions in $T_X$
are blocked because $p_n$ is the only place in $P_X$ that is marked.
When all places in $P(C)$ are marked, then all other places need to be empty, otherwise $(N,M)$ is not bounded (see Lemma 2.22 in \cite{deselesparza}).
Hence, $\sigma_B$ leads indeed to $M(C)$.

If $p_0 \neq p_n$, then there is a firing sequence removing the token from $p_0$ (because $M(C)$ is a home marking and $p_0 \not\in M(C)$).
Let $(N,M')[\sigma_1\rangle (N,M_1)$ be the sequence enabling a transition that removes the token from $p_0$ (for the first time).
In $M_1$, transition $t_1 \in {\post{p_0}}$ is enabled (because $N$ is free-choice all transitions in ${\post{p_0}}$ are enabled).
$\sigma_1$ cannot fire any transitions in $T_X$, because $p_0$ is the only place of $P_X$ that is marked.
Therefore, transitions in $\cluster{p_0}$ need to be enabled first.
Let $M_1'$ be the marking after firing $t_1$ ($(N,M')[\sigma_1\rangle (N,M_1)[t_1\rangle (N,M_1')$).
Note that $p_1$ is the only place of $P_X$ marked in $M_1'$.
Let $(N,M')[\sigma_2\rangle (N,M_2)$ be the sequence enabling a transition that removes the token from $p_1$.
Transition $t_2$ is enabled in the marking reached after $\sigma_2$:
$(N,M_2)[t_2\rangle (N,M_2')$. Again $\sigma_2$ cannot involve any transitions in $T_X$ and enables all transitions in ${\post{p_1}}$.
$M_2'$ marks place $p_2$ as the only place in $P_X$.
By recursively applying the argument it is possible to construct the firing sequence
$\sigma' = \sigma_1 \cdot t_1 \cdot \sigma_2 \cdot t_2 \cdot \ldots  \cdot \sigma_n \cdot t_n$ which marks $p_n$.
From the resulting marking one can fire $\sigma_B$ leading to marking $M(C)$.
For the case $p_0 = p_n$ we explained that such a $\sigma_B$ exists.
This shows that we can construct $\sigma = \sigma' \cdot \sigma_B $ such that $(N,M')[\sigma\rangle (N,M(C))$ and $\sigma \tproj_X = \langle t_1,t_2, \ldots ,t_n \rangle$.
\qed
\end{proof}

\subsection{Partial P-covers}
\label{sec:partpcov}

Hack's Coverability Theorem (Theorem~\ref{theo:cov}) states that well-formed free-choice nets have a P-cover.
Our proof that markings are distinguishable based on their enabled transitions exploits this.
In fact, we will construct nets using subsets of P-components. Therefore, we define a notion of a $Q$-projection.

\begin{definition}[Partial P-cover and Projection]\label{def:partialcov}
Let $(N,M)$ be a marked P-coverable Petri net.
Any $Q \subseteq \pcomp(N)$ with $Q\neq \emptyset$ is a partial P-cover of $N$.
$N \tproj_{\bigcup Q}$ is the $Q$-projection of $N$.
$(N \tproj_{\bigcup Q},M\tproj_{\bigcup Q})$ is the marked $Q$-projection of $(N,M)$.
\end{definition}

A $Q$-projection inherits properties from the original net (free-choice and well-formed) and the $Q$-projection is again P-coverable.

\begin{lemma}\label{lemma:partialcovprops}
Let $N=(P,T,F)$ be a P-coverable Petri net, $Q$ a partial P-cover of $N$, and $N \tproj_{\bigcup Q}=(P_Q,T_Q,F_Q)$ the $Q$-projection of $N$.
$\bigcup Q = P_Q \cup T_Q$, $Q \subseteq \pcomp(N \tproj_{\bigcup Q}) \subseteq \pcomp(N)$, and $N \tproj_{\bigcup Q}$ has a P-cover.
\end{lemma}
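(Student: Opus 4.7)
The plan is to unfold definitions and then prove the three assertions in the order they appear, with the reverse inclusion of the second assertion being the only delicate point.

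First, for $\bigcup Q = P_Q \cup T_Q$, I would simply unfold the definition of $Q$-projection: $N\tproj_{\bigcup Q} = (P \cap \bigcup Q,\, T \cap \bigcup Q,\, F \cap (\bigcup Q \times \bigcup Q))$, so $P_Q = P \cap \bigcup Q$ and $T_Q = T \cap \bigcup Q$. Since every element of $\bigcup Q$ is a node (each $X \in Q$ consists of places and transitions only), $\bigcup Q \subseteq P \cup T$ and hence $\bigcup Q = P_Q \cup T_Q$.

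For $Q \subseteq \pcomp(N\tproj_{\bigcup Q})$, I would pick an arbitrary $X \in Q$ and observe that since $X \subseteq \bigcup Q$, the subnet $N\tproj_X$ is identical to $(N\tproj_{\bigcup Q})\tproj_X$, so its being a strongly connected P-net is inherited automatically. For the closure condition, I would note that for any $p \in X \cap P$, the pre/post sets of $p$ in $N\tproj_{\bigcup Q}$ are subsets of the pre/post sets of $p$ in $N$, which are already contained in $X$ by assumption.

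The delicate inclusion is $\pcomp(N\tproj_{\bigcup Q}) \subseteq \pcomp(N)$, and this is the main obstacle: a priori, passing to a subnet could create ``new'' P-components that are not P-components of the original net because some incident arcs of $N$ were deleted. The key observation is that for every place $p \in \bigcup Q$, $p$ lies in some $X' \in Q \subseteq \pcomp(N)$, so by the P-component closure condition in $N$ we have ${\prenet{N}{p}} \cup {\postnet{N}{p}} \subseteq X' \subseteq \bigcup Q$; hence no arcs incident to $p$ are removed by the projection, and ${\prenet{N}{p}} = {\prenet{N\tproj_{\bigcup Q}}{p}}$ and symmetrically for post. Given this, for any $Y \in \pcomp(N\tproj_{\bigcup Q})$ and any $p \in Y \cap P \subseteq \bigcup Q$, the $N$-neighborhood of $p$ coincides with its $N\tproj_{\bigcup Q}$-neighborhood, which is contained in $Y$; and $N\tproj_Y = (N\tproj_{\bigcup Q})\tproj_Y$ remains a strongly connected P-net. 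So $Y$ is a P-component of $N$.

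Finally, for the P-cover claim, I would combine: by the first assertion, $\bigcup Q = P_Q \cup T_Q$, and by the second assertion, $Q \subseteq \pcomp(N\tproj_{\bigcup Q})$, so $\bigcup \pcomp(N\tproj_{\bigcup Q}) \supseteq \bigcup Q = P_Q \cup T_Q$, yielding a P-cover of $N\tproj_{\bigcup Q}$. I expect the bulk of the write-up to be routine bookkeeping with pre/post sets; the only place where one must actually think is the observation that the P-component closure property in $N$ prevents the projection from shearing off arcs at the places of $\bigcup Q$.
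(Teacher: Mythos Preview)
Your proposal is correct and follows essentially the same route as the paper: establish $\bigcup Q = P_Q \cup T_Q$ by unfolding the subnet definition, then argue $Q \subseteq \pcomp(N\tproj_{\bigcup Q})$ via the closure of places' neighborhoods, then $\pcomp(N\tproj_{\bigcup Q}) \subseteq \pcomp(N)$, and finally conclude the P-cover. In fact your handling of the reverse inclusion is more explicit than the paper's, which dispatches that step with the single sentence ``a partial P-cover cannot introduce new P-components''; your observation that every place $p \in \bigcup Q$ keeps its full $N$-neighborhood under projection is exactly the content behind that sentence.
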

\begin{proof}
Let $Q= \{ X_1, X_2, \ldots X_n\}$ be P-components of $N$, $P_i = X_i \cap P$, $T_i = X_i \cap T$, for $1 \leq i \leq n$.
$N \tproj_{\bigcup Q}=(P_Q,T_Q,F_Q)$ such that $P_Q = \bigcup_i P_i$ and $T_Q = \bigcup_i T_i$.
Hence, by definition $\bigcup Q = P_Q \cup T_Q$.

Each P-component $X_i$  is fully described by $P_i$, because in any P-component, place $p$ is always connected to the transitions in ${\prenet{N}{p}}$ and ${\postnet{N}{p}}$.
All the original components in $Q$ used to form the partial P-cover of $N$ are also components of $N \tproj_{\bigcup Q}$,
because the subsets of places are in $P_Q$ and all surrounding transitions are included and no new transitions have been added.
However, new combinations may be possible (covering subsets of the places in $P_Q$). Hence, $Q \subseteq \pcomp(N \tproj_{\bigcup Q})$.
$\pcomp(N \tproj_{\bigcup Q}) \subseteq \pcomp(N)$ because a partial P-cover cannot introduce new P-components.
$N \tproj_{\bigcup Q}$ has a P-cover, because $\bigcup \pcomp(N \tproj_{\bigcup Q}) = P_Q \cup T_Q$.
\qed
\end{proof}

\begin{lemma}\label{lemm:well-formed}
Let $N=(P,T,F)$ be a well-formed free-choice net and $Q$ a partial P-cover of $N$.
The $Q$-projection of $N$ (i.e., $N \tproj_{\bigcup Q}$) is a well-formed free-choice net.
\end{lemma}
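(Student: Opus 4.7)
The plan is to establish the two components of ``well-formed free-choice net'' in turn, relying on Lemma~\ref{lemma:partialcovprops} so that $Q$ sits inside the P-cover of $N\tproj_{\bigcup Q}$.

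Free-choice comes from a one-line set-theoretic check: for $t\in T_Q$ the preset in the projection is $\prenet{N\tproj_{\bigcup Q}}{t}=\prenet{N}{t}\cap P_Q$, and the free-choice disjunction in $N$ (presets equal, or presets disjoint) is preserved when both sides are intersected with the same set $P_Q$. For well-formedness my candidate marking is $M':=M\tproj_{P_Q}$, where $M$ witnesses well-formedness of $N$. Boundedness is then immediate: each $X_i\in Q$ supplies a place invariant freezing the token count on $X_i\cap P$, and transitions of $T_Q$ act on this count exactly as they do in $N$, so every reachable marking of $(N\tproj_{\bigcup Q},M')$ is pointwise bounded by $|M'|$.

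Liveness is the main obstacle, because the projection strictly weakens the preconditions of $T_Q$-transitions: tokens on the places in $\prenet{N}{t}\setminus P_Q$ are simply forgotten, so a reachable marking of $(N\tproj_{\bigcup Q},M')$ need not be the projection of any reachable marking of $(N,M)$ and firing sequences cannot be lifted directly. I would argue via a two-step pattern. Step one (easy direction): every firing sequence $\sigma$ of $(N,M)$ restricts to a valid sequence $\sigma\tproj_{T_Q}$ of $(N\tproj_{\bigcup Q},M')$ reaching the projected marking and enabling the same transitions of $T_Q$; hence projections of $R(N,M)$ sit inside $R(N\tproj_{\bigcup Q},M')$ and realise the corresponding enabled-transition sets. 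Step two (the technical heart): from any reachable $\widehat{M}$ of the projection, firing only $T_Q$-transitions, one can reach some $M^{\star}\tproj_{P_Q}$ with $M^{\star}\in R(N,M)$. Combined with step one and the liveness of $(N,M)$, step two yields liveness of the projection: from $\widehat{M}$ one first re-synchronises to the image of $R(N,M)$ and then uses liveness in $N$ to enable any chosen $t\in T_Q$.

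For step two I would exploit the free-choice property together with Theorem~\ref{theo:bm} applied inside $N$: for any cluster $C\subseteq \bigcup Q$ the theorem produces a firing sequence in $N$ that leads to the unique blocking marking of $C$ without firing any transition of $C$ itself. Every transition on that recipe lies in $T_Q$ once $C\subseteq \bigcup Q$, and because the projection only relaxes preconditions relative to $N$, the recipe remains executable from $\widehat{M}$; the resulting marking on $P_Q$ agrees with the projection of $B^C_{(N,M)}$. The subtle point, and the crux of the whole proof, is justifying that this ``recipe transfer'' from $M$-reachable markings to the possibly non-liftable $\widehat{M}$ really succeeds; I expect this to require combining the P-invariants supplied by $Q$ (which pin down the token count per component in $\widehat{M}$, exactly matching the one in $M$) with the free-choice property (so that at every branch point the projection has the same enabling behaviour as~$N$).
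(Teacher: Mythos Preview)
Your free-choice and boundedness arguments are fine and match the paper's in spirit. The liveness argument, however, has a genuine gap, and the paper takes an entirely different (and much shorter) route.

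\medskip
\noindent\textbf{The gap.} Your ``step two'' is not established. First, the claim that the blocking-marking recipe from Theorem~\ref{theo:bm} uses only $T_Q$-transitions is false: the theorem only guarantees $\sigma\in (T\setminus C)^*$, and transitions of $T\setminus T_Q$ may well be needed to reach $B_{(N,M)}^C$ in $N$. Second, even after projecting the recipe onto $T_Q$, you are applying it from a marking $\widehat{M}$ that need not be the projection of any marking in $R(N,M)$; nothing in your outline shows that the projected recipe is executable from $\widehat{M}$, or that it lands in a projection of $R(N,M)$. You acknowledge this yourself (``the subtle point, and the crux of the whole proof\ldots I expect this to require\ldots''), so what you have is a plan, not a proof. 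P-invariants alone only control token \emph{counts} per component, not which places carry them, and that is precisely what matters for enabling.

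\medskip
\noindent\textbf{The paper's approach.} Rather than any behavioural lifting, the paper proves structural liveness via Commoner's Theorem. The key structural observation is that every place $p\in P_Q$ lies in some P-component $X\in Q$, and by the definition of P-component $\prenet{N}{p}\cup\postnet{N}{p}\subseteq X\subseteq \bigcup Q$; hence $\prenet{N_Q}{p}=\prenet{N}{p}$ and $\postnet{N_Q}{p}=\postnet{N}{p}$ (not merely the intersection with $T_Q$). Consequently a subset $R\subseteq P_Q$ is a siphon (resp.\ trap) in $N_Q$ if and only if it is one in $N$. Since $N$ is well-formed free-choice, every proper siphon of $N$ contains a proper trap; the same then holds in $N_Q$, and marking all places of $N_Q$ yields a live marking by Commoner's Theorem. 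Boundedness comes from the P-cover (Lemma~\ref{lemma:partialcovprops}), and well-formedness follows. This avoids entirely the re-synchronisation problem you were wrestling with.
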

\begin{proof}
Let $N \tproj_{\bigcup Q} = N_Q = (P_Q,T_Q,F_Q)$.
$N_Q$ is free-choice because $N$ is free-choice and for any added place $p \in P_Q$ all surrounding
transitions ${\pre{p}} \cup {\post{p}}$ are also added. Hence, for any $p_1,p_2 \in P_Q$: ${\post{p_1}} = {\post{p_2}}$ or ${\post{p_1}} \cap {\post{p_2}} = \emptyset$.

$N_Q$ is structurally bounded because it is covered by P-components (Lemma~\ref{lemma:partialcovprops}). The number of tokens in a P-component is constant and serves as an upper bound
for the places in it.

To show that $N_Q$ is structurally live we use Commoner's Theorem \cite{deselesparza}:
``A free-choice marked net is live if and only if every proper siphon includes an initially marked trap''.
Places in $N$ and $N_Q$ have identical pre and post-sets, hence,
for any $R \subseteq P_Q$: ${\prenet{N}{R}} = {\prenet{N_Q}{R}}$ and  ${\postnet{N}{R}} = {\postnet{N_Q}{R}}$.
Hence, $R$ cannot be a siphon in $N$ and not in $N_Q$ (or vice versa).
${\prenet{N}{R}} \subseteq {\postnet{N}{R}}$ if and only if ${\prenet{N_Q}{R}} \subseteq {\postnet{N_Q}{R}}$.
Also, $R$ cannot be a trap in $N$ and not in $N_Q$ (of vice versa).
${\postnet{N}{R}} \subseteq {\prenet{N}{R}}$ if and only if ${\postnet{N_Q}{R}} \subseteq {\prenet{N_Q}{R}}$.
Take any proper siphon $R$ in $N_Q$. This is also a proper siphon in $N$. $R$ contains a proper trap $R'$ in $N$. Clearly, $R' \subseteq P_Q$
and is also a proper trap in $N_Q$. By initially marking all places, $R'$ is also marked and the net is (structurally) live.
Therefore, $N_Q$ is well-formed.
\qed
\end{proof}

A partial P-cover of $N$ may remove places. Removing places can only enable more behavior.
Transitions are only removed if none of the input and output places are included.
Therefore, any firing sequence in the original net that is projected on the set of remaining transitions is enabled in the net based on the partial P-cover.

\begin{lemma}\label{lemm:simu}
Let $(N,M)$ be a live and locally safe marked free-choice net (with $N=(P,T,F)$),
$Q$ a partial P-cover of $N$, and $(N_Q,M_Q)$ the marked $Q$-projection of $(N,M)$ (with $N_Q=(P_Q,T_Q,F_Q)$).
For any sequence $\sigma\in T^*$ that is executable in $(N,M)$
(i.e., $(N,M)[\sigma\rangle(N,M')$), the projected sequence $\sigma_Q = \sigma \tproj_{T_Q}$
is also executable in the marked $Q$-projection and ends in marking $M'\tproj_{\bigcup Q}$
(i.e., $(N_Q,M_Q)[\sigma_Q\rangle \allowbreak (N_Q,M'\tproj_{\bigcup Q})$).
\end{lemma}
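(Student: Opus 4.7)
The plan is to proceed by induction on the length of $\sigma$. The base case $\sigma = \langle~\rangle$ is immediate, since $\sigma_Q = \langle~\rangle$ leads from $M_Q = M\tproj_{\bigcup Q}$ to itself, which is the required target marking.

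For the inductive step, write $\sigma = \sigma' \cdot t$ with $(N,M)[\sigma'\rangle (N,M'') [t\rangle (N,M')$. By the induction hypothesis, $(N_Q,M_Q)[\sigma'_Q\rangle (N_Q, M''\tproj_{\bigcup Q})$ with $\sigma'_Q = \sigma'\tproj_{T_Q}$. I would then split on whether $t \in T_Q$.

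\textbf{Case 1:} $t \notin T_Q$. The key observation is structural: by definition of a P-component, any $X \in Q$ that contains a place $p$ must also contain the entire neighborhood ${\pre{p}} \cup {\post{p}}$. Hence, if $t$ does not lie in any $X \in Q$, then no place adjacent to $t$ can lie in $P_Q$, so ${\pre{t}} \cap P_Q = \emptyset$ and ${\post{t}} \cap P_Q = \emptyset$. Firing $t$ in $N$ therefore leaves the restriction to $P_Q$ untouched, giving $M'\tproj_{\bigcup Q} = M''\tproj_{\bigcup Q}$, while $\sigma_Q = \sigma'_Q$, so the induction hypothesis already supplies the required run.

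\textbf{Case 2:} $t \in T_Q$. Then $\sigma_Q = \sigma'_Q \cdot t$, and I must show that $t$ is enabled in $(N_Q, M''\tproj_{\bigcup Q})$ and that firing it yields $M'\tproj_{\bigcup Q}$. Because the $Q$-projection preserves the flow relation between surviving nodes, the preset of $t$ in $N_Q$ is exactly ${\pre{t}} \cap P_Q$, and similarly for the postset. Enabledness of $t$ in $(N,M'')$ gives $M''(p) \geq 1$ for every $p \in {\pre{t}}$, in particular for every $p \in {\pre{t}} \cap P_Q$, which is precisely what $N_Q$ demands. The token-update formula agrees with $N$ on every place in $P_Q$, so firing $t$ in $N_Q$ produces $(M''\tproj_{\bigcup Q} \bmin {\pre{t}^{N_Q}}) \bplus {\post{t}^{N_Q}} = M'\tproj_{\bigcup Q}$, closing the induction.

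The only delicate point — and thus the main (mild) obstacle — is handling transitions near the boundary of $T_Q$; this is settled entirely by the structural fact that a P-component includes the full neighborhood of each of its places, so $t \notin T_Q$ really does imply that $t$ is disjoint from $P_Q$ in the flow relation. Note that liveness, free-choiceness, and local safeness of $(N,M)$ are not actually consumed inside the induction itself; they travel with the net to $(N_Q,M_Q)$ via Lemmas~\ref{lemma:partialcovprops} and~\ref{lemm:well-formed}, ensuring that the simulation lemma can be invoked in the larger argument without losing the ambient structural hypotheses.
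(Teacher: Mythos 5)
Your proposal is correct and follows essentially the same route as the paper's proof: the paper's informal replay argument ("removing places can only enable more behavior; transitions outside the included P-components become disconnected") rests on exactly the two facts you isolate, namely that a transition outside $T_Q$ touches no place of $P_Q$ (because P-components contain the full neighborhood of their places) and that a transition in $T_Q$ has its preset in $N_Q$ restricted to ${\pre{t}} \cap P_Q$. Your induction merely makes this replay argument explicit, and your side remark that liveness, free-choiceness, and local safeness are not consumed here is also accurate.
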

\begin{proof}
Let $(N_Q,M_Q)$ be the marked $Q$-projection of $(N,M)$.
$N_Q$ has fewer places. Removing places can only enable more behavior and never block behavior.
Therefore, $\sigma$ is still possible after removing the places not part of any of the included P-components.
After removing these places, transitions not in any included P-component become disconnected and can occur without any constraints.
Hence, $\sigma$ can be replayed and results in the projected marking ($M'\tproj_{\bigcup Q}$).
Removing these transitions from the sequence ($\sigma_Q = \sigma \tproj_{T_Q}$) corresponds to removing them from the net.
Therefore, $(N_Q,M_Q)[\sigma_Q\rangle \allowbreak (N_Q,M'\tproj_{\bigcup Q})$.
\qed
\end{proof}

By combining the above insights, we can show that the $Q$-projection of a perpetual marked free-choice net is again a perpetual marked free-choice net.

\begin{lemma}\label{lemm:perpetualcons}
Let $(N,M)$ be a perpetual marked free-choice net and $Q$ a partial P-cover of $N$.
The marked $Q$-projection of $(N,M)$ (i.e., $(N \tproj_{\bigcup Q},M\tproj_{\bigcup Q})$) is a perpetual marked free-choice net.
\end{lemma}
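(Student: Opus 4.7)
The plan is to verify that $(N_Q,M_Q):=(N\tproj_{\bigcup Q},M\tproj_{\bigcup Q})$ satisfies each defining property of a perpetual marked free-choice net: $N_Q$ is free-choice, $(N_Q,M_Q)$ is live and bounded, and it has a home cluster. Free-choice (and even well-formedness) of $N_Q$ comes straight from Lemma~\ref{lemm:well-formed}. Boundedness follows because Lemma~\ref{lemma:partialcovprops} guarantees $N_Q$ is P-coverable, the token count on every P-component is a P-invariant, and $M_Q$ is finite, so every place is bounded.

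For liveness I would invoke Theorem~5.8 of \cite{deselesparza} (quoted in the excerpt): a bounded, well-formed free-choice net is live iff every P-component is initially marked. By Lemma~\ref{lemma:partialcovprops}, every $X'\in\pcomp(N_Q)$ is also a P-component of $N$, and hence is marked in $M$ by liveness of $(N,M)$; since $X'\cap P\subseteq P_Q$, the marked place of $X'$ is preserved in $M_Q$. Thus every P-component of $N_Q$ is marked in $M_Q$, giving liveness.

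The real work is the home cluster, which I expect to be the main obstacle. Let $C$ be a home cluster of $(N,M)$. Local safeness (Lemma~\ref{lemm:hmls}) applied to $M(C)$ says each P-component of $N$ contains exactly one place of $P(C)$, so in particular $P(C)\cap P_Q\neq\emptyset$. Pick any $p\in P(C)\cap P_Q$ and let $C'$ be its cluster in $N_Q$. A short chase through the definitions, using that $N$ is free-choice and that every $t\in T(C)$ has $p$ as input (hence lies in $T_Q$), gives $C'=(P(C)\cap P_Q)\cup T(C)$. The same reasoning shows that $M(C')=[p'\in P(C)\cap P_Q]$ enables precisely $T(C')$: any $t'\in T_Q$ enabled at $M(C')$ has at least one input in $P(C)\cap P_Q$ (since $t'$ belongs to some P-component of $Q$), and in a free-choice net this forces $t'\in T(C)$.

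I would close with the blocking theorem. Applying Lemma~\ref{lemm:simu} to a sequence $(N,M)[\sigma\rangle(N,M(C))$, which exists because $C$ is a home cluster, produces a sequence in $(N_Q,M_Q)$ reaching $M(C)\tproj_{\bigcup Q}=M(C')$. So $M(C')$ is a blocking marking of $C'$ reachable from $M_Q$. For any $M'_Q\in R(N_Q,M_Q)$, the net $(N_Q,M'_Q)$ inherits liveness and boundedness, so by Theorem~\ref{theo:bm} it has some blocking marking of $C'$ reachable from $M'_Q$; this marking is also reachable from $M_Q$, and by uniqueness of the blocking marking for $(N_Q,M_Q)$ it must coincide with $M(C')$. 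Hence $M(C')$ is reachable from every reachable marking of $(N_Q,M_Q)$, making $C'$ a home cluster. Once $C'$ is correctly identified and $M(C')$ is shown to enable exactly $T(C')$, the blocking theorem carries the rest.
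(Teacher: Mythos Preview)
Your proof is correct and follows the same overall decomposition as the paper: free-choice and well-formedness via Lemma~\ref{lemm:well-formed}, boundedness via the P-cover from Lemma~\ref{lemma:partialcovprops}, liveness via Theorem~5.8 of \cite{deselesparza} using $\pcomp(N_Q)\subseteq\pcomp(N)$, and the home cluster taken as $C\cap(P_Q\cup T_Q)$.

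The only substantive difference is in how you certify the home cluster. The paper argues directly: the transitions of $C_Q$ are live in $(N_Q,M_Q)$, and whenever they are enabled local safeness forces every P-component's unique token to sit in its unique $P(C_Q)$-place, so the marking must equal $M(C_Q)$; hence $M(C_Q)$ is a home marking. You instead first reach $M(C')$ once via Lemma~\ref{lemm:simu}, observe it is a blocking marking for $C'$, and then invoke the uniqueness part of Theorem~\ref{theo:bm} from every reachable $M'_Q$ to conclude the blocking marking found there must coincide with $M(C')$. Both routes are short and valid; yours is more explicit (you actually verify that $C'$ is a cluster of $N_Q$ and that $M(C')$ enables exactly $T(C')$, which the paper leaves implicit), while the paper's avoids the extra appeal to Lemma~\ref{lemm:simu} and the blocking theorem by leaning on the local-safeness count it had just established.
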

\begin{proof}
Let $N \tproj_{\bigcup Q} = N_Q = (P_Q,T_Q,F_Q)$ and $M_Q = M\tproj_{\bigcup Q}$.
$N_Q$ is a well-formed free-choice net (see Lemma~\ref{lemm:well-formed}).
To prove that $(N_Q,M_Q)$ is perpetual, we need to show that it is live, bounded, and has a home cluster.

Let $C$ be a home cluster of $(N,M)$. Every P-component of $N$ includes precisely one place of $P(C)$ and holds precisely one token in any reachable state.
Any P-component in $N_Q$ is also a P-component in $N$ (Lemma~\ref{lemma:partialcovprops})
and therefore also has one token in any reachable state. Hence, $(N_Q,M_Q)$ is locally safe.

Every P-component of $N_Q$ is marked in $M$ and also in $M_Q$.
(Lemma~\ref{lemma:partialcovprops} shows that $\pcomp(N \tproj_{\bigcup Q}) \subseteq \pcomp(N)$.
This implies that all components of $N_Q$ are also components of $N$ and thus initially marked.)
Hence, we can apply Theorem 5.8 in \cite{deselesparza} to show that the net is live.

$C_Q = C \cap (P_Q \cup T_Q)$ is a home cluster of $(N_Q,M_Q)$ because the transitions in $C_Q \cap T_Q$ are live and
when they are enabled only the places in $P(C_Q)$ are marked. Hence, $M(C_Q)$ is a home marking.
\qed
\end{proof}

\subsection{Characterization of Markings in Perpetual Free-Choice Nets}
\label{sec:charmain}

We have introduced perpetual free-choice nets because it represents a large and relevant class of models for which
the enabling of transitions uniquely identifies a marking, i.e., these nets are lucent.
In such process models, there can never be two different markings enabling the same set of transitions.
Note that this result is much more general than the blocking marking theorem which only refers to blocking markings and a single cluster.

\begin{theorem}[Characterization of Markings in Perpetual Free-Choice Nets]\label{theo:unmark}
Let $(N,M)$ be a perpetual marked free-choice net.
$(N,M)$ is lucent.\footnote{The original proof published in ``Wil M. P. van der Aalst:
Markings in Perpetual Free-Choice Nets Are Fully Characterized by Their Enabled Transitions. Petri Nets 2018: 315-336'' contains an error that can be repaired, but this makes the proof overly complex. One needs to consider a sequence of disagreeing P-components. Here, a simpler, and more direct, proof is given that does not require some of the intermediate results presented before. The earlier results are still valid and meaningful. The proof now looks more involved, because it is self-contained and does not use any intermediate results. Interestingly, the proof can be extended beyond live free-choice nets, but this is out of scope for this corrected proof.}
\end{theorem}
\begin{proof}
Let $(N,M)$ be a perpetual marked free-choice net, i.e., $(N,M)$ is live, bounded, and has a home cluster $C$.
For any $M_1,M_2 \in R(N,M)$ such that $\mi{en}(N,M_1)=\mi{en}(N,M_2)$, we need to prove that $M_1 = M_2$.
We assume that this is not the case, and show that this leads to a contradiction.
\def\1token{\raisebox{.5pt}{\textcircled{\raisebox{-.9pt} {1}}}}
\def\2token{\raisebox{.5pt}{\textcircled{\raisebox{-.9pt} {2}}}}
\def\token{\raisebox{.5pt}{\textcircled{\raisebox{-.3pt} {$\bullet$}}}}

Let $M_1^c$ and $M_2^c$ be such that $\mi{en}(N,M_1^c)=\mi{en}(N,M_2^c)$ and $M_1^c \neq M_2^c$. Let us consider the tokens in both markings and partition these into three classes: 
$\token$-tokens,
$\1token$-tokens, and 
$\2token$-tokens.
$\token$-tokens are tokens where $M_1^c$ and $M_2^c$ agree. These can be found by taking the maximal submarking contained in both.
$\token$-tokens are also called ``agreement tokens''.
$\1token$-tokens are the tokens in $M_1^c$, but not $M_2^c$.
$\2token$-tokens are the tokens in $M_2^c$, but not $M_1^c$.
Hence, $M_1^c$ is composed of $\token$-tokens and $\1token$-tokens, and $M_2^c$ is composed of $\token$-tokens and $\2token$-tokens. 
$\1token$- and $\2token$-tokens are also called ``disagreement tokens''.
We will use this notation throughout the proof.
Because $(N,M)$ is safe, each place can have only one of these
$\token$-,
$\1token$-,
$\2token$-tokens.
For example, if a place contains both a 
$\1token$-token and 
$\2token$-token, then it has a $\token$-token instead.

Starting from both $M_1^c$ and $M_2^c$, we synchronously execute transitions using only $\token$-tokens. Since we are not using
$\1token$ or $\2token$-tokens, we can do this synchronously.
In this process, each place remains safe (at most one $\token$-,
$\1token$-,
$\2token$-token). If we try to move the $\token$-tokens closer to 
the home cluster $C$, there is a point at which this is no longer possible (use liveness and the free-choice property, and move towards the home marking $M(C)$).
Hence, we end up in two new markings $M_1$ and $M_2$, where
all enabled transitions in $M_1$ need to consume both 
$\token$- and
$\1token$-tokens, and 
all enabled transitions in $M_2$ need to consume both 
$\token$- and
$\2token$-tokens.
In none of the markings, a transition is enabled based on only ``agreement tokens'' or only ``disagreement tokens''.
Compared to $M_1^c$ and $M_2^c$ only $\token$-tokens were moved. 

In the remainder, we only consider the markings $M_1$ and $M_2$ just constructed.
In both of these markings, only clusters containing ``agreement tokens'' and ``disagreement tokens'' are enabled, i.e., an enabled cluster has either $\token$- and
$\1token$-tokens (enabled in $M_1$), or $\token$- and
$\2token$-tokens (enabled in $M_2$).

Pick an arbitrary cluster enabled in marking $M_1$. We refer to this cluster as $C_1$. 
All places in this cluster are marked with $\token$- and
$\1token$-tokens, but the cluster is not enabled in $M_2$.
However, there must be a firing sequence starting in $M_2$, putting a token in one of the places in $C_1$.
Take a \emph{shortest} firing sequence $\sigma_s$ starting in $M_2$ and putting a token in one of the empty places in $C_1$.
This sequence must start with an enabled cluster having only $\token$- and
$\2token$-tokens. We refer to this cluster as $C_2$. 
Note that $C$, $C_1$, and $C_2$ should be different.
($C_1$ and $C_2$ have ``disagreement tokens'' and cannot be the same.
$C$ cannot be the same as $C_1$, because $C$ would be enabled in $M_1$
while having still tokens in $C_2$. Etc.)

Let $p^{mrk}$ be a place in $C_2$ having a $\token$-token.
Based on $\sigma_s$, we can construct a path $\rho_s$ in the Petri net
starting in $C_2$ leading to one of the unmarked places in $C_1$.
This path $\rho_s$ ``follows a token'' from $C_2$ to $C_1$ using only transitions in $\sigma_s$. Just ``color'' the token in $p^{mrk}$ and all it descendants to see that such a path exists (also note that to reach $M(C)$ the agreement tokens in $C_1$ need to be removed).
Next, we ``compact'' path $\rho_s$ into another path $\rho_1$ which visits each cluster only once.
This is possible because if $\rho_s$ visits a cluster multiple times, we can create a short-cut and immediately jump to the last occurrence of the cluster (in a cluster all places are connected to all transitions due to the free-choice property).

To explain the mechanism of compacting the path, assume
$\rho_s = \langle p^s_0,t^s_1,p^s_1, \allowbreak \ldots, \allowbreak p^s_{k-1},\allowbreak t^s_{k},p^s_k
 \rangle$ with $p^s_0 = p^{mrk}$ and $p^s_k \in C_1$.
We inspect the path from left to right using a pointer $i$ with $0 \leq i \leq k$.
Initially, $i=0$.
If place $p^s_i$ is an element of $C_1$, 
then we remove the rest of $\rho_s$ because we have already reached the target cluster $C_1$.
If place $p^s_i$ is not in $C_1$, but there is a later place $p^s_j$ belonging to the same cluster, then we remove the subsequence $\langle t^s_{i+1},p^s_{i+1}, \ldots , t^s_{j},p^s_j
 \rangle$ from $\rho_s$ and set $i$ to $j$.
In this case, we take $j$ to be maximal, i.e., the latest visit of $\rho_s$ to the cluster. (Note that $p^s_i$ is also an input place of $t^s_{j+1}$, because both belong to the same cluster. Hence, the path remains connected.)
If place $p^s_i$ is the only place of a cluster on the path, 
then we keep $p^s_i$ and $t^s_{i+1}$ and set $i$ to $i+1$ (i.e., move to the next place).
This process can be repeated until we reach $C_1$.
The resulting path is $\rho_1$ and is a subsequence of $\rho_s$, but still a path of $N$.

Using the above construction, we can construct a path 
$\rho_1 = \langle p^{mrk},t_1,p_1, \ldots , p_{n-1},\allowbreak t_{n},p^{conn}
 \rangle$ leading from 
$p^{mrk} \in C_2$ to some place $p^{conn} \in C_1$ such that the path contains a subset of the transitions in $\sigma_s$
and does not visit the same cluster twice, i.e.,
each place in $P_1 = \{ p^{mrk},p_1, \ldots , p_{n-1},p^{conn}\}$
corresponds to a unique cluster.

There is also a path from $p^{conn}$ to the home cluster $C$ because the net is strongly connected. Intuitively, we can also ``follow a token'' from $C_1$ to $C$.  Also this path can be compacted into another path $\rho_2$ which starts $p^{conn}$ and ends in some place $p^{end} \in C$ and visits each cluster only once.
We use the same principle as before: if a cluster is visited multiple times along the path, we jump immediately to the last occurrence
(using the fact that in a cluster all places are connected to all transitions).
Therefore, we can construct a path 
$\rho_2 = \langle p^{conn},t_{n+1},p_{n+1}, \ldots , p_{m-1},t_{m},p^{end}
 \rangle$ leading from 
$p^{conn} \in C_1$ to some place $p^{end} \in C$
visiting each cluster at most once, i.e.,
each place in $P_2 = \{ p^{conn},p_{n+1}, \ldots , p_{m-1},p^{end}\}$
corresponds to a unique cluster.

Next to 
$\rho_1 = \langle p^{mrk},t_1,p_1, \ldots , p_{n-1},t_{n},p^{conn}
 \rangle$ and
$\rho_2 = \langle p^{conn},t_{n+1},p_{n+1}, \ldots\allowbreak , p_{m-1},t_{m},p^{end}
 \rangle$, we create two more paths:
 $\rho_{1+2} = \langle p^{mrk},t_1,p_1, \ldots , p_{n-1},t_{n},p^{conn},\allowbreak t_{n+1},p_{n+1}, \ldots , p_{m-1},t_{m},p^{end}
 \rangle$
 and 
${\rho_{2}}' = \langle p^{alt},t_{n+1},p_{n+1}, \ldots , p_{m-1},t_{m},p^{end}
 \rangle$ with $p^{alt} \in C_1$ having a $\token$-token (i.e., marked in both $M_1$ and $M_2$).
Also in the paths $\rho_{1+2}$ and ${\rho_{2}}'$, 
each cluster appears at most once. For ${\rho_{2}}'$ this is obvious, because 
$p^{alt}$ and $p^{conn}$ are in the same cluster and 
for the rest ${\rho_{2}}'$ and $\rho_2$ are identical.

To see that also in $\rho_{1+2}$ each cluster appears at most once, assume there is a cluster $C'$ that appears at least twice in $\rho_{1+2}$.
This implies there is a place $p_1' \in P_1 \cap C'$ and a place $p_2' \in P_2 \cap C'$. If one of these places is $p^{conn}$, 
then this is impossible, because cluster $C'$ did not appear twice in the shorter sequences $\rho_1$ and $\rho_2$. Moreover, also when both are different from $p^{conn}$, this is still not possible. 
Consider a transition $t' \in \sigma_s$ that consumed a token from $p_1'$ while executing $\sigma_s$. Due to the free-choice property, transition $t'$ also consumed a token from $p_2'$.
Transition $t'$ occurred before the cluster $C_1$ got enabled, i.e., there was still a $\token$-token in $p^{alt}$.
However, this implies that there is a marking $M'$ in which 
the places $p_1'$, $p_2'$, and $p^{alt}$ are marked at the same time.

Now consider path ${\rho_{2}}' = \langle p^{alt},t_{n+1},p_{n+1}, \ldots , p_{m-1},t_{m},p^{end}
 \rangle$ in the marking $M'$ just described. 
At least two places on this path contain a token ($p_2'$ and $p^{alt}$). Starting from this marking $M'$, move these two tokens along the path ${\rho_{2}}'$ towards place $p^{end}$.
Because the net is free-choice, we can control the clusters visited on this path and each transition in one of these clusters consumes precisely one token from path ${\rho_{2}}'$, and produces at least one token for path ${\rho_{2}}'$. Hence, the number of tokens on the path does not decrease. Since the net has a home cluster containing $p^{end}$, we can continue to do this until there are multiple tokens in $p^{end}$ or until one of the transitions in $C$ fires. If a transition on the path is enabled, we fire it. If no transition on the path is enabled
and we did not reach the home marking $M(C)$ yet, we can execute transitions towards the home marking $M(C)$.
By following this process, we can put two tokens in $p^{end}$
or fire a transition from cluster $C$ while there are still tokens in the rest of the net.
This leads to a contradiction (the net is safe and cannot have two tokens in a place or reach markings strictly larger than $M(C)$).
Therefore, also $\rho_{1+2}$ has the property that each cluster is visited at most once (like
$\rho_1$, $\rho_2$, and ${\rho_{2}}'$).

Now consider the status of path $\rho_{1+2} = \langle p^{mrk},t_1,p_1, \ldots , p_{n-1},t_{n},p^{conn},\allowbreak t_{n+1},\allowbreak p_{n+1},\allowbreak \ldots , p_{m-1},t_{m},p^{end}
 \rangle$ (which visits clusters at most once) in marking $M_1$.
In this marking, $p^{mrk}$ and $p^{conn}$ are marked.
$p^{mrk}$ has a $\token$-token and $p^{conn}$ a $\1token$-token.
Hence, at least two places on path $\rho_{1+2}$ contain a token
and all places on the path belong to different clusters.
Therefore, we can use again the strategy to move tokens along the path towards place $p^{end}$.
Each transition in $\rho_{1+2}$ consumes precisely one token from the path and produces at least one token for the path.
As long as none of the transitions in $C$ occurs, we can ensure that the number of tokens on the path does not decrease
and move tokens towards $p^{end}$. This leads to a contradiction the moment there are two tokens in $p^{end}$ or the transitions in $C$ are enabled while having additional tokens (the net is safe).
Since $\rho_{1+2}$ cannot exist, also $M_1$ and $M_2$ cannot exist.

$M_1$ and $M_2$ were constructed using $M_1^c$ and $M_2^c$.
Hence, there cannot be $M_1^c$ and $M_2^c$ be such that $\mi{en}(N,M_1^c)=\mi{en}(N,M_2^c)$ and $M_1^c \neq M_2^c$.
Hence, the initial assumption leads to a contradiction, showing that the net must be lucent.
\qed
\end{proof}
\begin{figure}[t]
\centering
\includegraphics[width=7.5cm]{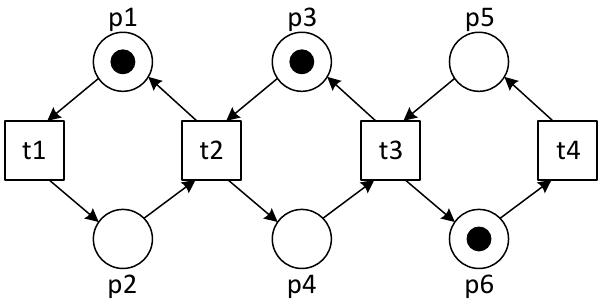}
\caption{A live and locally safe marked free-choice net that is not perpetual.
The model is also not lucent since there are two reachable markings $M_1 = [p1,p3,p6]$ and $M_2 = [p1,p4,p6]$ that both enable $t1$ and $t4$.}
\label{fig-locally-safe-not-perpetual}
\end{figure}

For the class of perpetual marked free-choice nets, markings are uniquely identified by the set of enabled transitions.
As shown before, the free-choice property is needed and liveness and boundedness are not sufficient.
The above theorem also does not hold for live and locally safe marked free-choice nets (see for example Figure~\ref{fig-fc-nonlucid}).
The requirement that the net has a home cluster seems essential for characterizing marking in terms of enabled transitions.

Consider for example the live and locally safe marked free-choice net in Figure~\ref{fig-locally-safe-not-perpetual}.
There are three P-components: $\{p1,p2,t1,t2\}$, $\{p3,p4,t2,t3\}$, and $\{p5,p6,t3,t4\}$.
These always contain precisely one token. However, there are two reachable markings $M_1 = [p1,p3,p6]$ and $M_2 = [p1,p4,p6]$ that both enable $t1$ and $t4$.
This can be explained by the fact that the net is not perpetual.
There are four clusters, but none of these clusters is a home cluster.
Note that the counter-example in Figure~\ref{fig-locally-safe-not-perpetual} is actually a marked graph.
This illustrates that the home cluster requirement is also essential for subclasses of free-choice nets.

\section{Conclusion and Implications}
\label{sec:concl}

We started this paper by posing the question: ``What is the class of Petri nets for which the marking is uniquely identified by the set of enabled transitions?''.
This led to the definition of \emph{lucency}.
The main theorem proves that markings from perpetual marked free-choice nets are guaranteed to be lucent.
Moreover, we showed that all requirements are needed (in the sense that dropping any of the requirements yields a counterexample).
Table~\ref{tab:results} provides an overview of the examples used in this paper.
For example, even live and safe free-choice nets may have multiple markings having the same set of enabled transitions.
\begin{table}[t]
\centering
\caption{Overview of the examples used:
\emph{Marked PN} = figure showing a marked Petri net,
\emph{FreC} = free-choice, \emph{Live} = live, \emph{Boun} = bounded, \emph{Safe} = safe, \emph{LocS} = locally safe, \emph{PC} = number of P-components,
\emph{HClu} = net has at least one home cluster, \emph{Perp} = perpetual, \emph{UnBM} = net has unique blocking marking for each cluster,
\emph{Lucent} = lucent, \emph{Pls} = number of places, \emph{Trs} = number of transitions, and \emph{RM} = number of reachable markings.}\label{tab:results}
\begin{tabular}{|c|c|c|c|c|c|c|c|c|c|c|c|c|c|}
  \hline
\emph{Marked PN} & \emph{FreC} & \emph{Live} & \emph{Boun} & \emph{Safe} & \emph{LocS} & \emph{PC} & \emph{HClu} & \emph{Perp} & \emph{UnBM} & \emph{Lucent} & \emph{Pls} & \emph{Trs} & \emph{RM}\\   \hline
Figure~\ref{fig-alternating} & Yes & Yes & Yes & Yes & Yes & 2 & Yes & Yes & Yes & Yes & 4 & 4 & 4\\
Figure~\ref{fig-intro-nfc} & No & Yes & Yes & Yes & Yes & 2 & Yes & Yes & No & No & 6 & 6 & 6\\
Figure~\ref{fig-not-home} & Yes & Yes & Yes & Yes & Yes & 4 & Yes & Yes & Yes & Yes & 8 & 7 & 9\\
Figure~\ref{fig-not-home-wf-net} & Yes & Yes & Yes & Yes & Yes & 6 & Yes & Yes & Yes & Yes & 11 & 10 & 11\\
Figure~\ref{fig-not-loc-safe} & Yes & Yes & Yes & Yes & No & 5 & No & No & Yes & Yes & 9 & 6 & 8\\
Figure~\ref{fig-nfc} & No & Yes & Yes & Yes & Yes & 2 & No & No & No & No & 5 & 4 & 6\\
Figure~\ref{fig-fc-nonlucid} & Yes & Yes & Yes & Yes & Yes & 3 & No & No & Yes & No & 8 & 8 & 12\\
Figure~\ref{fig-locally-safe-not-perpetual} & Yes & Yes & Yes & Yes & Yes & 3 & No & No & Yes & No & 6 & 4 & 8\\
  \hline
\end{tabular}
\end{table}

Other characterizations may be possible. An obvious candidate is the class of Petri nets without PT and TP handles \cite{handles}.
As shown in \cite{aalwfm98lncs} there are many similarities between free-choice workflow nets and well-structured (no PT and TP handles)
workflow nets when considering notions like soundness and P-coverability. Moreover, it seems possible to relax the notion of a regeneration point by considering simultaneously marked clusters.

Structure theory aims to link structural properties of the Petri net to its behavior.
The connection between lucency and home clusters in free-choice nets could be relevant for verification and synthesis problems.
The ability to link the enabling of transitions to states (i.e., lucency) is particularly relevant when observing running systems or processes, e.g.,
in the field of process mining \cite{process-mining-book-2016} where people study the relationship between modeled behavior and observed behavior.
If we assume lucency, two interesting scenarios can be considered:
\begin{itemize}
  \item \emph{Scenario 1: The system's interface or the event log reveals the set of enabled actions.} At any point in time or for any event in the log, we know the internal state of the system or process. This makes it trivial to create an accurate process model (provided that all states have been visited).
  \item \emph{Scenario 2: The system's interface or the event log only reveals the executed actions.} The internal state of the system is unknown, but we know that it is fully determined by the set of enabled actions (some of which may have been observed).
\end{itemize}
It is easy to create a discovery algorithm for the first scenario.
The second scenario is more challenging. However, the search space can be reduced considerably by assuming lucency (e.g., learning perpetual marked free-choice nets).
Hence, Theorem~\ref{theo:unmark} may lead to new process mining algorithms or help to prove the correctness and/or guarantees of existing algorithms.

Assume that each event in the event log is characterized by $e=(\sigma_{\mi{pref}},a,\sigma_{\mi{post}})$ where $\sigma_{\mi{pref}}$ is the prefix (activities that happened before $e$), $a$ is the activity executed,
 and $\sigma_{\mi{post}}$ is the postfix (activities that happened after $e$). The result of applying a process discovery algorithm can be seen as a function $\mi{state}()$ which maps any event $e$ onto a state $\mi{state}(e)$,
 i.e., the state in which $e$ occurred (see \cite{wires-replay,two-step-mining-SOSYM} for explanations). Hence, events $e_1$ and $e_2$ satisfying $\mi{state}(e_1)=\mi{state}(e_2)$ occurred in the same state and can be viewed as ``equivalent''. This way discovery is reduced to finding an \emph{equivalence relation} on the set of events in the log. Given such an equivalence relation, we can apply the approach described under Scenario 1.
 Viewing process discovery as ``finding an equivalence relation on events'' provides an original angle on this challenging and highly relevant learning task.

\bibliographystyle{plain}
\bibliography{lit}

\begin{thebibliography}{10}

\bibitem{aaljcsc}
{W.M.P. van der} Aalst.
\newblock {The Application of Petri Nets to Workflow Management}.
\newblock {\em The Journal of Circuits, Systems and Computers}, 8(1):21--66,
  1998.

\bibitem{aalwfm98lncs}
{W.M.P. van der} Aalst.
\newblock {Workflow Verification: Finding Control-Flow Errors using
  Petri-net-based Techniques}.
\newblock In {W.M.P. van der} Aalst, J.~Desel, and A.~Oberweis, editors, {\em
  {Business Process Management: Models, Techniques, and Empirical Studies}},
  volume 1806 of {\em Lecture Notes in Computer Science}, pages 161--183.
  Springer-Verlag, Berlin, 2000.

\bibitem{rcis2013-keynote-the-quest-for-the-right-model}
{W.M.P. van der} Aalst.
\newblock {Mediating Between Modeled and Observed Behavior: The Quest for the
  ``Right'' Process}.
\newblock In {\em IEEE International Conference on Research Challenges in
  Information Science (RCIS 2013)}, pages 31--43. IEEE Computing Society, 2013.

\bibitem{process-mining-book-2016}
{W.M.P. van der} Aalst.
\newblock {\em {Process Mining: Data Science in Action}}.
\newblock Springer-Verlag, Berlin, 2016.

\bibitem{wires-replay}
{W.M.P. van der} Aalst, A.~Adriansyah, and {B. van} Dongen.
\newblock {Replaying History on Process Models for Conformance Checking and
  Performance Analysis}.
\newblock {\em WIREs Data Mining and Knowledge Discovery}, 2(2):182--192, 2012.

\bibitem{soundness-FACS}
{W.M.P. van der} Aalst, {K.M. van} Hee, {A.H.M. ter} Hofstede, N.~Sidorova,
  H.M.W. Verbeek, M.~Voorhoeve, and M.T. Wynn.
\newblock {Soundness of Workflow Nets: Classification, Decidability, and
  Analysis}.
\newblock {\em Formal Aspects of Computing}, 23(3):333--363, 2011.

\bibitem{two-step-mining-SOSYM}
{W.M.P. van der} Aalst, V.~Rubin, H.M.W. Verbeek, {B.F. van} Dongen,
  E.~Kindler, and C.W. G{\"u}nther.
\newblock {Process Mining: A Two-Step Approach to Balance Between Underfitting
  and Overfitting}.
\newblock {\em Software and Systems Modeling}, 9(1):87--111, 2010.

\bibitem{mbp-aal-stahl-2011}
{W.M.P. van der} Aalst and C.~Stahl.
\newblock {\em {Modeling Business Processes: A Petri Net Oriented Approach}}.
\newblock MIT press, Cambridge, MA, 2011.

\bibitem{aal_min_TKDE}
{W.M.P. van der} Aalst, A.J.M.M. Weijters, and L.~Maruster.
\newblock {Workflow Mining: Discovering Process Models from Event Logs}.
\newblock {\em IEEE Transactions on Knowledge and Data Engineering},
  16(9):1128--1142, 2004.

\bibitem{bestfcn}
E.~Best.
\newblock {Structure Theory of Petri Nets: the Free Choice Hiatus}.
\newblock In W.~Brauer, W.~Reisig, and G.~Rozenberg, editors, {\em Advances in
  Petri Nets 1986 Part I: Petri Nets, central models and their properties},
  volume 254 of {\em Lecture Notes in Computer Science}, pages 168--206.
  Springer-Verlag, Berlin, 1987.

\bibitem{bdefcn}
E.~Best, J.~Desel, and J.~Esparza.
\newblock {Traps Characterize Home States in Free-Choice Systems}.
\newblock {\em Theoretical Computer Science}, 101:161--176, 1992.

\bibitem{structure-theory-ToPNoC-advanced-course2010}
E.~Best and H.~Wimmel.
\newblock {Structure Theory of Petri Nets}.
\newblock In K.~Jensen, {W.M.P. van der} Aalst, G.~Balbo, M.~Koutny, and
  K.~Wolf, editors, {\em Transactions on Petri Nets and Other Models of
  Concurrency (ToPNoC VII)}, volume 7480 of {\em Lecture Notes in Computer
  Science}, pages 162--224. Springer-Verlag, Berlin, 2013.

\bibitem{deselesparza}
J.~Desel and J.~Esparza.
\newblock {\em {Free Choice Petri Nets}}, volume~40 of {\em Cambridge Tracts in
  Theoretical Computer Science}.
\newblock Cambridge University Press, Cambridge, UK, 1995.

\bibitem{BPM_book_Marlon_Marcello_Jan_Hajo_2013}
M.~Dumas, {M. La} Rosa, J.~Mendling, and H.~Reijers.
\newblock {\em {Fundamentals of Business Process Management}}.
\newblock Springer-Verlag, Berlin, 2013.

\bibitem{Esparza98TCS}
J.~Esparza.
\newblock {Reachability in Live and Safe Free-Choice Petri Nets is
  NP-Complete}.
\newblock {\em Theoretical Computer Science}, 198(1-2):211--224, 1998.

\bibitem{handles}
J.~Esparza and M.~Silva.
\newblock {Circuits, Handles, Bridges and Nets}.
\newblock In G.~Rozenberg, editor, {\em Advances in Petri Nets 1990}, volume
  483 of {\em Lecture Notes in Computer Science}, pages 210--242.
  Springer-Verlag, Berlin, 1990.

\bibitem{blocking-theorem-gaujala2003}
B.~Gaujala, S.~Haar, and J.~Mairesse.
\newblock {Blocking a Transition in a Free Choice Net and What it Tells About
  its Throughput}.
\newblock {\em Journal of Computer and System Science}, 66(3):515--548, 2003.

\bibitem{Genrich84}
H.~J. Genrich and P.~S. Thiagarajan.
\newblock {A Theory of Bipolar Synchronization Schemes}.
\newblock {\em Theoretical Computer Science}, 30(3):241--318, 1984.

\bibitem{aalcontrolflowacta}
B.~Kiepuszewski, {A.H.M. ter} Hofstede, and {W.M.P. van der} Aalst.
\newblock {Fundamentals of Control Flow in Workflows}.
\newblock {\em Acta Informatica}, 39(3):143--209, 2003.

\bibitem{murata}
T.~Murata.
\newblock {Petri Nets: Properties, Analysis and Applications}.
\newblock {\em Proceedings of the IEEE}, 77(4):541--580, April 1989.

\bibitem{reisig-book-2013}
W.~Reisig.
\newblock {\em {Petri Nets: Modeling Techniques, Analysis, Methods, Case
  Studies}}.
\newblock Springer-Verlag, Berlin, 2013.

\bibitem{lopn1}
W.~Reisig and G.~Rozenberg, editors.
\newblock {\em Lectures on Petri Nets I: Basic Models}, volume 1491 of {\em
  Lecture Notes in Computer Science}. Springer-Verlag, Berlin, 1998.

\bibitem{lopn2}
W.~Reisig and G.~Rozenberg, editors.
\newblock {\em Lectures on Petri Nets II: Applications}, volume 1492 of {\em
  Lecture Notes in Computer Science}. Springer-Verlag, Berlin, 1998.

\bibitem{patterns-book-2016}
N.~Russell, {W.M.P. van der} Aalst, and {A. ter} Hofstede.
\newblock {\em {Workflow Patterns: The Definitive Guide}}.
\newblock MIT press, Cambridge, MA, 2016.

\bibitem{thiavoss}
{P.S.} Thiagarajan and K.~Voss.
\newblock {A fresh look at free choice nets}.
\newblock {\em Information and Control}, 61(2):85--113, 1984.

\bibitem{blocking-theorem-wehler2010}
J.~Wehler.
\newblock {Simplified Proof of the Blocking Theorem for Free-Choice Petri
  Nets}.
\newblock {\em Journal of Computer and System Science}, 76(7):532--537, 2010.

\end{thebibliography}

\end{document}